\title{Integrality structures in topological strings and quantum $2$-functions}
\author{Shengmao Zhu \thanks{szhu@zju.edu.cn}}
\newtheorem{thm}{Theorem}[section]
\newtheorem{theorem}[thm]{Theorem}
\newtheorem{lemma}[thm]{Lemma}
\newtheorem{corollary}[thm]{Corollary}
\newtheorem{corollary*}[thm]{Corollary*}
\newtheorem{proposition*}[thm]{Proposition*}
\newtheorem{conjecture}[thm]{Conjecture}
\theoremstyle{definition}
\newtheorem{remark}[thm]{Remark}
\newtheorem{definition}[thm]{Definition}
\newtheorem{defn-thm}[thm]{Definition-Theorem}
\date{} 
\begin{document}
  \maketitle
\begin{abstract}
In this article, we first prove the integrality of open string BPS
numbers for a class of toric Calabi-Yau manifolds named generalized
conifolds, by applying the method introduced in our previous work
\cite{LZ} to the explicit disk counting formula obtained in
\cite{PS}. Then, motivated by the integrality structures in open
topological string theory, we introduce a mathematical notion of
``quantum 2-function'' which can be viewed as the quantization of
the notion of ``2-function'' introduced in \cite{SVW1}. Finally, we
provide a basic example of quantum 2-function and discuss the
quantization of 2-functions.
\end{abstract}

\section{Introduction} \label{section-introduction}
This paper concerns the integrality structures appearing naturally
in topological string theory. The basic example of mirror symmetry
constructed in \cite{COGP} implies the integrality of instanton
numbers $N_{0,d}$ which are defined through the genus zero
Gromov-Witten invariants $K_{0,d}$ of the quintic. More precisely,
the genus $0$ Gromov-Witten potential takes the form
\begin{align} \label{formula-genus0}
F_0=\sum_{d\geq 1}K_{0,d}a^d=\sum_{d\geq 1}N_{0,d}\text{Li}_3(a^d)
\end{align}
where we used the notation of poly-logarithm
$\text{Li}_r(x)=\sum_{k\geq 1}\frac{x^k}{k^r}$, $a$ is a parameter
related to the K\"ahler class of the quintic. Formula
(\ref{formula-genus0}) is usually referred as the multiple covering
formula or Aspinwall-Morrison formula \cite{AM} in literatures.

In general, the Gromov-Witten invariants $K_{0,d}$ are rational
numbers, which is obvious from both the definition in Gromov-Witten
theory, as well as from the B-model computations. However, the
integrality of $N_{0,d}$ is not clear from the formula
(\ref{formula-genus0}). In \cite{KSV}, Kontsevich-Schwarz-Vologodsky
proposed a mathematical proof of the integrality of $N_{0,d}$ by
using the $p$-adic theory, see \cite{SV1,SV2} for further
progresses. The physical explanation of integrality was given in
\cite{GV} by relating $N_{0,d}$ to the degeneracy of BPS states.
More precisely, let $X$ be a Calabi-Yau 3-fold and let $K^{X}_{g,Q}$
be the genus $g$ Gromov-Witten invariant of $X$ in the curve class
$Q\in H_2(X,\mathbb{Z})$, Gopakumar and Vafa \cite{GV} expressed the
Gromov-Witten invariants $K_{g,Q}^X$ in terms of integer invariants
$N_{g,Q}^X$ obtained by BPS state counts
\begin{align} \label{formula-GV}
F^X&=\sum_{g\geq 0}g_s^{2g-2}\sum_{Q\neq 0}K_{g,Q}^Xa^Q\\\nonumber
&=\sum_{g\geq 0, d\geq 1}\sum_{Q\neq
0}\frac{1}{d}N_{g,Q}^X\left(2\sin\frac{dg_s}{2}\right)^{2g-2}a^{dQ}.
\end{align}
Usually, these predicted integer invariants $N_{g,Q}^X$ are referred
as Gopakumar-Vafa invariants in literatures. It is clear that
formula (\ref{formula-genus0}) is the genus $0$ part of the above
formula (\ref{formula-GV}). For a compact Calabi-Yau 3-fold $X$, the
mathematical proof of the integrality of $N_{g,Q}^{X}$ is still
unknown. However, when $X$ is a toric Calabi-Yau 3-fold, the
integrality of  $N_{g,Q}^{X}$ was first proved by P. Peng for the
case of toric Del Pezzo surfaces \cite{Peng}. The proof for general
toric Calabi-Yau 3-folds was then given by Konishi in
\cite{Konishi}. See also \cite{GZ} for several explicit formula of
the Gopakumar-Vafa invariants for local $\mathbb{P}^2$.

Now we consider the open topological strings theory on Calabi-Yau
3-fold $X$. Suppose $L\subset X$ is a Lagrangian submanifold which
may be viewed as the support of a topological D-brane in the
A-model. It is well-known that the classical deformation space of
$L$ modulo Hamiltonian isotopy is unobstructed and of dimension
equal to $b_1(L)$. The superpotential $W$  depending on the K\"ahler
moduli of $X$ and the choice of a flat bundle over $L$, is the
generating function counting worldsheet instanton corrections from
holomophic disks ending on the Lagrangian $L$.

More precisely, the spacetime superpotential can be identified with
the topological disk partition function and is conjectured to admit
an expansion of the general form
\begin{align} \label{formula-disk}
W(a,x)=F^{(X,L)}_{\text{disk}}(a,x)=\sum_{Q,m}K_{0,Q,m}a^Q
x^m=\sum_{Q,m}\sum_{k\geq 1}\frac{n_{0,Q,m}}{k^2}a^{kQ}x^{km}.
\end{align}
where the sum is over relative cohomology classes in $H_2(X,L)$, $a$
denotes the closed string K\"ahler parameters of $X$ and $y$ is the
open string deformation parameters. The final transformation in
(\ref{formula-disk}) is a resummation of the multi-cover
contributions and it is conjectured in \cite{OV} that the resulting
expansion coefficients $n_{0,Q,m}$ are integers which are
interpreted as the counting of BPS states in class $(Q,m)$.


\begin{remark}
Sometimes, such as in \cite{SVW1}, we use $\beta\in H_2(X,L)$ to
denote the class in $H_2(X,L)$, then the above formula
(\ref{formula-disk}) can also be written as
\begin{align} \label{formula-disk2}
W(q)=\sum_{\beta}K_{0,\beta}q^{\beta}=\sum_{\beta}n_{0,\beta}\text{Li}_2(q^{\beta})
\end{align}
where $q$ is the combination of moduli parameter of $(X,L)$.
\end{remark}

  When
$X$ is the quintic and $L$ is the real locus, the superpotential $W$
had been computed in \cite{Wal}. See \cite{AB,SLY,Wal2,Wal3} for
more results about the superpotential $W$ for the compact Calabi-Yau
manifolds. However, the integrality of $n_{0,\beta}$ is not clear
from the formula (\ref{formula-disk2}). A mathematical proof was
proposed in \cite{SV2} follows the work \cite{KSV}.

When $X$ is a toric Calabi-Yau 3-fold and $L$ is the special
Lagrangian submanifold named Aganagic-Vafa A-brane \cite{AV}, the
mirror geometry information of $(X,L)$ is encoded in a mirror curve.
The superpotential ( or the disc counting formula) of $(X,L)$ can be
derived from the mirror curve \cite{AV,AKV}. Moreover, Aganagic and
Vafa surprisingly found the computation by using mirror symmetry and
the result from Chern-Simons knot invariants  are matched. In
\cite{AKV},  Aganagic Klemme and Vafa investigated the integer
ambiguity appearing in the disc counting and discovered that the
corresponding ambiguity in Chern-Simons theory was described by the
framing of the knot. They checked that the two ambiguities match for
the case of the unknot, by comparing the disk amplitudes on both
sides. Motivated by this, one can introduce an integer $\tau$ named
framing to describe the ambiguity. Let $\hat{X}$ to be the resovled
conifold, and $D_\tau$ the Aganagic-Vafa A-brane which is the dual
of the framed unknot $U_\tau$. In \cite{MV}, Mari\~no and Vafa
carefully studied the open topological string on
$(\hat{X},D_{\tau})$, they computed the disk counting amplitude
$F_{\text{disk}}^{(\hat{X},D_\tau)}$ for this model and obtained the
explicit expression for the corresponding integer invariants
$n_{0,Q,m}^{(\hat{X},D_\tau)}$, see also \cite{Zhu4} for this
computations, where we use the notation $n_{m,0,Q}(\tau)$ to denote
this integer invariant instead. The mathematical proof of
integrality $n_{0,Q,m}^{(\hat{X},D_\tau)}$ was given in \cite{LZ}.
Moreover, we find in \cite{LZ0,Zhu3} an interesting explanations of
the integrality of these number by quiver representation theory,
this provides the first example of toric Calabi-Yau and quiver
correspondence, see \cite{Zhu2} for a review of these integrality
results in topological strings. Then in \cite{KRSS1,KRSS2}, a
general knot-quiver correspondence was proposed. This correspondence
for a large class of knot, and links was established
\cite{KRSS2,PSS,EKL,PS}.

Furthermore, with the help of the knot-quiver correspondence, M.
Panfl and P. Sulkowski \cite{PS} obtained an explicit disc counting
formula for the open topological string theory on a class of toric
Calabi-Yau manifolds without compact four-cycles, also referred to
as strip geometries or generalized conifold.

Let $\widehat{X}$  be a generalized conifold with the K\"ahler
parameters arising from two types $a_1,..,a_r$ and $A_1,...,A_s$
where $r,s\geq 0$, and let $D_\tau$  be the framed Aganagic-Vafa
A-brane. Set $\mathbf{l}=(l_1,...,l_r)$, $\mathbf{k}=(k_1,...,k_s)$,
and $|\mathbf{l}|=\sum_{j=1}^rl_j$, $|\mathbf{k}|=\sum_{j=1}^sk_j$.
Given a positive integer $m$, we define
\begin{align} \label{formula-cmlk}
c_{m,\mathbf{l},\mathbf{k}}(\tau)&=\frac{(-1)^{m(\tau+1)+|\mathbf{l}|}}{m^2}
\binom{m\tau+|\mathbf{l}|+|\mathbf{k}|-1}{m-1}\\\nonumber
&\times\prod_{j=1}^{r}\binom{m}{l_j}\prod_{j=1}^s
\frac{m}{m+k_j}\binom{m+k_j}{k_j}.
\end{align}

Then Panfl and Sulkowski obtained the following disk counting
formula for $(\widehat{X},D_{\tau})$ (cf. formula (4.19) in
\cite{PS}):
\begin{align} \label{formula-disk-generalizedconifold}
F_{\text{disk}}^{(\widehat{X},D_\tau)}&=\sum_{m,\mathbf{l},\mathbf{k}}c_{m,\mathbf{l},\mathbf{k}}(\tau)a_1^{l_1}\cdots
a_r^{l_r}A_1^{k_1}\cdots A_s^{k_s}x^m\\\nonumber
&=\sum_{m,\mathbf{l},\mathbf{k}}\sum_{d\geq
1}n_{m,\mathbf{l},\mathbf{k}}(\tau)\frac{1}{d^2}a_1^{dl_1}\cdots
a_r^{dl_r}A_1^{dk_1}\cdots A_s^{dk_s}x^{dm}.
\end{align}

By M\"obius inversion formula, we have the explicit formula for the
disc counting BPS invariants
\begin{align} \label{formula-generaldiscounting}
n_{m,\mathbf{l},\mathbf{k}}(\tau)
=\sum_{d|\text{gcd}(m,\mathbf{l},\mathbf{k})}\frac{\mu(d)}{d^2}c_{m/d,\mathbf{l}/d,\mathbf{k}/d}(\tau)
\end{align}

In this article,  we generalize the method used in \cite{LZ} to
prove that
\begin{theorem} \label{theorem-introduction}
For any $m$, $\mathbf{l}$ and $\mathbf{k}$ given above, we have
\begin{align}
n_{m,\mathbf{l},\mathbf{k}}(\tau)\in \mathbb{Z}.
\end{align}
\end{theorem}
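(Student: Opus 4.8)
The plan is to reduce the integrality of $n_{m,\mathbf{l},\mathbf{k}}(\tau)$ to a statement about $p$-adic congruences, following the strategy of \cite{LZ}. By the M\"obius-inversion formula \eqref{formula-generaldiscounting}, it suffices to show that for every prime $p$ and every $a\geq 1$, whenever $p^a \mid \gcd(m,\mathbf{l},\mathbf{k})$ one has
\begin{align}
p^{2a}\;\Big|\;\Big(c_{m,\mathbf{l},\mathbf{k}}(\tau) - c_{m/p,\mathbf{l}/p,\mathbf{k}/p}(\tau)\Big)
\end{align}
in the sense that $c_{m,\mathbf{l},\mathbf{k}}(\tau)-c_{m/p,\ldots}(\tau)$, which a priori has $p^2$ in its denominator, is actually $p$-adically integral of the right order; more precisely, writing $c_{m,\mathbf{l},\mathbf{k}}(\tau) = \frac{1}{m^2}P_{m,\mathbf{l},\mathbf{k}}(\tau)$ with $P$ an integer (the product of binomial coefficients, which is an integer because $\frac{m}{m+k_j}\binom{m+k_j}{k_j}$ is the Catalan-type integer), one needs the Gauss-type congruence $P_{m,\mathbf{l},\mathbf{k}}(\tau)\equiv P_{m/p,\mathbf{l}/p,\mathbf{k}/p}(\tau)\pmod{p^{2\,\mathrm{ord}_p(m)}}$. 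Summing these congruences over the divisor lattice via the multiplicativity of $\mu$ then collapses \eqref{formula-generaldiscounting} to a $p$-adic integer, and since this holds for all $p$, the number is a rational integer.

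The key steps, in order, are as follows. First I would record that each factor in $P_{m,\mathbf{l},\mathbf{k}}(\tau)$ is integral: $\binom{m}{l_j}$ obviously, $\frac{m}{m+k_j}\binom{m+k_j}{k_j}$ because it equals $\binom{m+k_j-1}{k_j}-\binom{m+k_j-1}{k_j-1}$ (a difference of binomials, hence a generalized Catalan number), and $\binom{m\tau+|\mathbf{l}|+|\mathbf{k}|-1}{m-1}$ obviously. Second, I would establish a Lucas/Jacobi-type lifting-the-exponent statement for each factor separately: for a binomial $\binom{pM}{pL}$ one has the Jacobi congruence $\binom{pM}{pL}\equiv\binom{M}{L}\pmod{p^{3\,\mathrm{ord}_p\,(\,)}}$ (Wolstenholme/Jacobi), and similarly for the Catalan-type factor and for the framing factor $\binom{m\tau+|\mathbf{l}|+|\mathbf{k}|-1}{m-1}$ after the substitution $m\mapsto m/p$, $\mathbf{l}\mapsto\mathbf{l}/p$, $\mathbf{k}\mapsto\mathbf{k}/p$ which replaces $m\tau+|\mathbf{l}|+|\mathbf{k}|-1$ by $(m\tau+|\mathbf{l}|+|\mathbf{k}|)/p - 1$ — here the bookkeeping of the ``$-1$'' is the delicate point and is exactly the technical heart of \cite{LZ}, handled by rewriting $\binom{m\tau+|\mathbf{l}|+|\mathbf{k}|-1}{m-1}=\frac{m}{m\tau+|\mathbf{l}|+|\mathbf{k}|}\binom{m\tau+|\mathbf{l}|+|\mathbf{k}|}{m}$ and applying Jacobi's congruence to the clean binomial $\binom{m\tau+|\mathbf{l}|+|\mathbf{k}|}{m}$. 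Third, I would multiply these factor-wise congruences together (using that a product of things congruent mod $p^{N}$ is congruent mod $p^{N}$, with a short argument to track that the extra powers of $p$ coming from the denominators $\frac{1}{m^2}$, $\frac{m}{m+k_j}$, $\frac{m}{m\tau+\cdots}$ are absorbed by the $p^{3\,\mathrm{ord}_p}$ afforded by Jacobi's congruence), obtaining the desired $c_{m,\ldots}\equiv c_{m/p,\ldots}\pmod{p^{2\,\mathrm{ord}_p(m)}\cdot(\text{unit})}$. Fourth, I would feed these into \eqref{formula-generaldiscounting}: fix $p$, write $\mathrm{ord}_p(\gcd(m,\mathbf{l},\mathbf{k}))=a$, split the sum over $d$ according to $\mathrm{ord}_p(d)$, and telescoping/cancellation of the $\mu$-weighted terms shows the $p$-part of the denominator vanishes.

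The main obstacle I expect is the precise power-of-$p$ accounting in the factor containing the framing $\tau$, namely showing that $\binom{m\tau+|\mathbf{l}|+|\mathbf{k}|-1}{m-1}$ satisfies a Gauss congruence of order $2\,\mathrm{ord}_p(m)$ uniformly in $\tau\in\mathbb{Z}$ — including the case $p\mid \tau$ or $p\mid (|\mathbf{l}|+|\mathbf{k}|)$, and the ``$-1$'' shift that prevents a naive application of Jacobi's congruence. This is exactly the ingredient that \cite{LZ} developed for the single-variable framed-unknot case ($r=0$, $s=0$), and the work here is to check that their argument is robust when $|\mathbf{l}|+|\mathbf{k}|$ enters the top of the binomial and when several additional Lucas-type factors $\prod_j\binom{m}{l_j}$ and $\prod_j\frac{m}{m+k_j}\binom{m+k_j}{k_j}$ are multiplied in — in particular that the minimum over all factors of their congruence orders still dominates $2\,\mathrm{ord}_p(m)+\mathrm{ord}_p(m^2)$ once the denominator $\frac{1}{m^2}$ is included. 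A secondary, more bookkeeping-type obstacle is handling primes $p=2,3$ where Wolstenholme's strengthening (the $p^3$ rather than $p^2$) fails and one only has Jacobi's weaker $\binom{pM}{pL}\equiv\binom{M}{L}\pmod{p^{2\,\mathrm{ord}_p}}$ (or even just $p$); one must check the margins still suffice, or treat these small primes by a direct computation as in \cite{LZ}.
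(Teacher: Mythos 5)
Your overall strategy --- M\"obius inversion plus a prime-by-prime congruence $c_{m,\mathbf{l},\mathbf{k}}(\tau)\equiv c_{m/p,\mathbf{l}/p,\mathbf{k}/p}(\tau)$ of $p$-adic order $2\,\mathrm{ord}_p(m)$, summed over the divisor lattice --- is exactly the paper's (Section \ref{section-proof}), and you correctly locate the delicate points, including the ``$-1$'' shift in the framing binomial, which the paper handles by the same rewriting $\binom{m\tau+|\mathbf{l}|+|\mathbf{k}|-1}{m-1}=\frac{m}{m\tau+|\mathbf{l}|+|\mathbf{k}|}\binom{m\tau+|\mathbf{l}|+|\mathbf{k}|}{m}$. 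But two steps of your plan have genuine gaps. First, the reduction ``it suffices to show \dots whenever $p^a\mid\gcd(m,\mathbf{l},\mathbf{k})$'' misses the primes $p$ with $p\mid m$ but $p\nmid\gcd(m,\mathbf{l},\mathbf{k})$: for such $p$ the denominator $m^2$ still contributes $p^{2\,\mathrm{ord}_p(m)}$, yet every divisor $d$ in (\ref{formula-generaldiscounting}) is coprime to $p$, so there is no pairing of terms and no congruence to exploit; one must prove the pure valuation bound $\mathrm{ord}_p\bigl(\binom{m\tau+|\mathbf{l}|+|\mathbf{k}|-1}{m-1}\prod_i\binom{m}{l_i}\prod_j\frac{m}{m+k_j}\binom{m+k_j}{k_j}\bigr)\geq 2\,\mathrm{ord}_p(m)$ for each single term. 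This is exactly the content of Lemmas \ref{lemma-2} and \ref{lemma-3} in the convention that ``the second term is defined to be zero''. Second, your success criterion for the factor-wise Jacobi congruences --- that ``the minimum over all factors of their congruence orders still dominates $2\,\mathrm{ord}_p(m)$'' --- is the wrong test and fails whenever some $\beta_i=\mathrm{ord}_p(l_i)$ or $\gamma_j=\mathrm{ord}_p(k_j)$ is small compared with $\alpha=\mathrm{ord}_p(m)$ (in the extreme case $\beta_i=0$ the reduced factor does not even exist). What actually saves the argument is a compensation: when a factor's Gauss congruence is deficient by $2(\alpha-\beta_i)$, that same factor's own valuation is at least $\alpha-\beta_i$ (via $\binom{m}{l_i}=\frac{m}{l_i}\binom{m-1}{l_i-1}$ and its analogues), and one must verify that these contributions accumulate correctly across the whole product. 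That verification is precisely the case analysis (Cases 1, 2a, 2b) in Lemma \ref{lemma-2}, and it is the bulk of the work your outline does not supply.

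Two smaller points. At $p=2$ the sign $(-1)^{m(\tau+1)+|\mathbf{l}|}$ does not simply pass to $(-1)^{(m(\tau+1)+|\mathbf{l}|)/2}$, so $p=2$ is not a finite ``direct computation'' but needs its own congruence with a sign correction (Lemma \ref{lemma-3}, resting on the $(-1)^{[n/2]}$ refinement in Lemma \ref{lemma-1}); by contrast your worry about $p=3$ is unnecessary, since only an order-$2\alpha$ congruence is required and the paper derives it for all odd primes from the elementary function $f_p(n)=\prod_{i\le n,\,p\nmid i}i$ rather than from Wolstenholme--Jacobsthal. Also, your identity for the Catalan-type factor is off: $\frac{m}{m+k_j}\binom{m+k_j}{k_j}=\binom{m+k_j-1}{k_j}$, whereas $\binom{m+k_j-1}{k_j}-\binom{m+k_j-1}{k_j-1}=\frac{m-k_j}{m+k_j}\binom{m+k_j}{k_j}$; integrality of the factor survives, but the slip matters once you start tracking valuations through it.
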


Motivated by the multiple covering formulas (\ref{formula-genus0})
and (\ref{formula-disk}), Schwarz, Vologodsky and Walcher
\cite{SVW1} introduced the mathematical notion of {\em $s$-function}
which is the integral linear combinations of poly-logarithms. We
review the definition and properties of $2$-functions in Section
\ref{Section-quantum2functions}, then it is easy to see that the
proof of Theorem \ref{theorem-introduction}  immediately implies
that
\begin{corollary}
The disk counting formula $F_{\text{disk}}^{(\widehat{X},D_\tau)}$
given by formula (\ref{formula-disk-generalizedconifold}) for the
generalized conifold $(\widehat{X},D_{\tau})$ is a $2$-function.
\end{corollary}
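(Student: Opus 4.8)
The plan is to deduce the corollary directly from Theorem~\ref{theorem-introduction} together with the definition of a $2$-function. By definition (reviewed in Section~\ref{Section-quantum2functions}), a power series $W(q)=\sum_\beta a_\beta q^\beta$ is a $2$-function precisely when it can be written as an integral combination of dilogarithms, $W(q)=\sum_\beta n_\beta\,\text{Li}_2(q^\beta)$ with all $n_\beta\in\mathbb Z$; equivalently, the $n_\beta$ produced from the $a_\beta$ by the M\"obius inversion $n_\beta=\sum_{d\mid\beta}\frac{\mu(d)}{d^2}a_{\beta/d}$ are integers. For the generalized conifold I would take the multi-index $\beta=(m,\mathbf{l},\mathbf{k})$, the monomial $q^\beta=a_1^{l_1}\cdots a_r^{l_r}A_1^{k_1}\cdots A_s^{k_s}x^m$, and $a_\beta=c_{m,\mathbf{l},\mathbf{k}}(\tau)$. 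The second equality in (\ref{formula-disk-generalizedconifold}) is exactly the resummation $\sum_{d\ge 1}\frac{1}{d^2}(q^\beta)^d=\text{Li}_2(q^\beta)$, so that $F_{\text{disk}}^{(\widehat{X},D_\tau)}=\sum_{m,\mathbf{l},\mathbf{k}}n_{m,\mathbf{l},\mathbf{k}}(\tau)\,\text{Li}_2\!\left(a_1^{l_1}\cdots A_s^{k_s}x^m\right)$. First I would record this grouping; once it is in place, the single nontrivial hypothesis needed to conclude that $F_{\text{disk}}^{(\widehat{X},D_\tau)}$ is a $2$-function is precisely $n_{m,\mathbf{l},\mathbf{k}}(\tau)\in\mathbb Z$, which is the content of Theorem~\ref{theorem-introduction}. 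Hence the corollary follows at once.

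Since the corollary rests entirely on the proof of Theorem~\ref{theorem-introduction}, I describe how I would obtain the underlying integrality, extending the method of \cite{LZ}. Starting from the M\"obius formula (\ref{formula-generaldiscounting}) and absorbing $\frac{1}{d^2}\cdot\frac{1}{(m/d)^2}=\frac{1}{m^2}$, one rewrites
\begin{align*}
n_{m,\mathbf{l},\mathbf{k}}(\tau)=\frac{1}{m^2}\sum_{d\mid g}\mu(d)\,\tilde{c}_{m/d,\mathbf{l}/d,\mathbf{k}/d}(\tau),\qquad g=\gcd(m,\mathbf{l},\mathbf{k}),
\end{align*}
where, using $\frac{m}{m+k_j}\binom{m+k_j}{k_j}=\binom{m+k_j-1}{k_j}$, the quantity $\tilde{c}_{m,\mathbf{l},\mathbf{k}}(\tau)=(-1)^{m(\tau+1)+|\mathbf{l}|}\binom{m\tau+|\mathbf{l}|+|\mathbf{k}|-1}{m-1}\prod_{j=1}^{r}\binom{m}{l_j}\prod_{j=1}^{s}\binom{m+k_j-1}{k_j}$ is an integer (a signed product of binomial coefficients). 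Thus the theorem is equivalent to the divisibility $m^2\mid\sum_{d\mid g}\mu(d)\,\tilde{c}_{m/d,\mathbf{l}/d,\mathbf{k}/d}(\tau)$.

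Next I would localize at a prime $p$ and prove $p^{2v_p(m)}\mid\sum_{d\mid g}\mu(d)\,\tilde{c}_{(m,\mathbf{l},\mathbf{k})/d}(\tau)$, which by the Chinese remainder theorem gives the full divisibility. Restricting the M\"obius sum to squarefree $d$ and pairing each $d'$ coprime to $p$ with $pd'$, the sum telescopes into differences $\tilde{c}_{(m,\mathbf{l},\mathbf{k})/d'}(\tau)-\tilde{c}_{(m,\mathbf{l},\mathbf{k})/(pd')}(\tau)$, where the $p$-valuation of the top slot equals $v_p(m)$. It therefore suffices to establish the single-prime, adjacent-level congruence
\begin{align*}
\tilde{c}_{m,\mathbf{l},\mathbf{k}}(\tau)\equiv\tilde{c}_{m/p,\mathbf{l}/p,\mathbf{k}/p}(\tau)\pmod{p^{2v_p(m)}},
\end{align*}
valid whenever $p\mid\gcd(m,\mathbf{l},\mathbf{k})$, i.e.\ the invariance of $\tilde c$ under the Frobenius $\beta\mapsto\beta/p$ to the stated $p$-adic order. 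I would prove this factor by factor, writing the leading term as $\binom{m\tau+|\mathbf{l}|+|\mathbf{k}|-1}{m-1}=\frac{m}{N}\binom{N}{m}$ with $N=m\tau+|\mathbf{l}|+|\mathbf{k}|$ so that every factor becomes a genuine binomial $\binom{pA}{pB}$, and then invoking Jacobsthal--Ljunggren congruences of the form $\binom{pA}{pB}\equiv\binom{A}{B}\pmod{p^3}$ together with their iterates.

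The \textbf{main obstacle} is to obtain these binomial congruences to the \emph{correct} power $p^{2v_p(m)}$ uniformly. The difficulty is that the $p$-adic valuations of $m$, of the $l_j$, and of the $k_j$ need not coincide---we only know $p$ divides their gcd---so the various factors $\binom{m}{l_j}$ and $\binom{m+k_j-1}{k_j}$ contribute carries (in the sense of Kummer's theorem) at different scales, and these must be summed to beat $2v_p(m)$ while the framing-dependent leading factor is controlled simultaneously. One must also track the sign $(-1)^{m(\tau+1)+|\mathbf{l}|}$ under $\beta\mapsto\beta/p$ and treat the exceptional primes $p=2,3$, where the Wolstenholme--Jacobsthal estimates degrade, by a separate refined analysis. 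Overcoming this uniform $p$-power bookkeeping, exactly as in the single-parameter case of \cite{LZ}, is what the extension to strip geometries requires.
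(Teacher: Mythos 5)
Your proof is correct and matches the paper's: the paper likewise reduces the corollary to the integrality content of Theorem~\ref{theorem-introduction}, only packaging the criterion as the $p$-integrality of $c_{m,\mathbf{l},\mathbf{k}}(\tau)-\frac{1}{p^2}c_{\frac{m}{p},\frac{\mathbf{l}}{p},\frac{\mathbf{k}}{p}}(\tau)$ via Lemma~\ref{Lemma-2function} rather than directly as integrality of the M\"obius-inverted coefficients --- the two formulations are equivalent, and both ultimately rest on Lemmas~\ref{lemma-2} and~\ref{lemma-3}. Your sketch of the underlying integrality argument (working prime by prime, telescoping the M\"obius sum into adjacent-level differences $\tilde{c}_{\beta/d'}-\tilde{c}_{\beta/(pd')}$, and establishing Wolstenholme/Jacobsthal-type binomial congruences to order $p^{2v_p(m)}$, with separate care at $p=2$) is also precisely the strategy of Section~\ref{section-proof}.
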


The disc counting formula (\ref{formula-disk}) can be generalized to
the higher genus case. Indeed, based on Ooguri and Vafa's work
\cite{OV}, the generating function of all genus open Gromov-Witten
invariants can also be expressed in terms of a series of new
integers which were later refined by Labastida, Mari\~no and Vafa in
\cite{LM1,LM2,LMV}. Motivated by their results, we formulate a
mathematical notation of quantum $2$-function which can be viewed as
the quantum version of the $2$-function introduced in \cite{SVW1}.
\begin{definition}  We call a  formal
power series
\begin{align} \label{formula-F}
F(\lambda,\mathbf{z},x)=\sum_{g\geq 0,m\geq 1}\sum_{\mathbf{d}>
 \mathbf{0}}\lambda^{2g}K_{g,\mathbf{d},m}\mathbf{z}^{\mathbf{d}}x^m\in
\mathbb{Q}[[\lambda^2,z_1,...,z_r,x]]
\end{align}
with rational coefficients $K_{g,\mathbf{d},m}$ a {\em quantum
$2$-function} if it can be written in the following form
\begin{align} \label{formula-quantum2function}
F(\lambda,\mathbf{z},x)&=\sum_{g\geq 0,m\geq 1}\sum_{\mathbf{d}>
\mathbf{0}}\sum_{k\geq
1}n_{g,\mathbf{d},m}\frac{2\lambda}{km}\sin\left(\frac{km\lambda}{2}\right)
\left(2\sin
\frac{k\lambda}{2}\right)^{2g-2}\mathbf{z}^{\mathbf{kd}}x^{km}
\end{align}
 with $n_{g,\mathbf{d},m}\in \mathbb{Z}$, where we used the multiple-index
 notations
 $\mathbf{z}=(z_1,...,z_r)$, $\mathbf{d}=(d_1,...,d_r)$ and $\mathbf{z}^{\mathbf{d}}=z_1^{d_1}\cdots
 z_r^{d_r}$.
\end{definition}
It is clear that when $\lambda=0$, $F(0,\mathbf{z},x)$ is just the
$2$-function in the sense of \cite{SVW1}. We hope that the quantum
2-function have independent interests in mathematics.

Then we provide a basic example for quantum $2$-function. We
consider the open topological string  model $(\hat{X},D_\tau)$,
where $\hat{X}$ a resolved conifold and $D_\tau$ is the
Aganagic-Vafa A-brane which is the large $N$ duality of the framed
unknot $U_\tau$ with framing $\tau$ in Chern-Simons theory, we
consider the generating function
\begin{align} \label{formula-F-conifold-intro}
F^{(\hat{X},D_\tau)}(\lambda,a,x)=\sum_{g\geq 0,m\geq 1}\sum_{d>
 0}\lambda^{2g}K_{g,d,m}^{(\hat{X},D_\tau)}a^{d}x^m
\end{align}
where $K_{g,d,m}^{(\hat{X},D_\tau)}$ are the   one-hole open
Gromov-Witten invariants of genus $g$ with degree $d$ and writhe
number $m$, whose mathematical definition was given in \cite{KL}. We
will show that the results obtained in our previous work \cite{LZ}
imply that
\begin{theorem}
The function $F^{(\hat{X},D_\tau)}(\lambda,a,x)$ given by formula
(\ref{formula-F-conifold-intro}) is a quantum $2$-function.
\end{theorem}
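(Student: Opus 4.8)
The plan is to recognize that proving $F^{(\hat{X},D_\tau)}(\lambda,a,x)$ is a quantum $2$-function amounts to exhibiting integers $n_{g,m,d}(\tau)$ such that the all-genus one-hole open Gromov--Witten potential of $(\hat{X},D_\tau)$ resums into the canonical BPS form \eqref{formula-quantum2function}. Since the closed K\"ahler parameter space here is one-dimensional (a single $a$), the multi-index $\mathbf{d}$ collapses to a single integer $d$, and the claim specializes accordingly. First I would recall the explicit Chern--Simons/large-$N$ expression for the framed-unknot open string amplitude: the reformulated invariants of Labastida--Mari\~no--Ooguri--Vafa for $(\hat{X},D_\tau)$ are known in closed form, and by construction of the Ooguri--Vafa--LMOV conjecture (proved in this setting) the generating function $F^{(\hat{X},D_\tau)}$ already has an expansion of the shape
\begin{align*}
F^{(\hat{X},D_\tau)}(\lambda,a,x)=\sum_{g\geq 0,\,m\geq 1}\sum_{d>0}\sum_{k\geq 1} n_{g,m,d}(\tau)\,\frac{2\lambda}{km}\sin\!\left(\frac{km\lambda}{2}\right)\left(2\sin\frac{k\lambda}{2}\right)^{2g-2} a^{kd} x^{km},
\end{align*}
so the entire content of the theorem is the \emph{integrality} $n_{g,m,d}(\tau)\in\mathbb{Z}$ of these coefficients, together with the bookkeeping that the invariants $K^{(\hat{X},D_\tau)}_{g,d,m}$ defined in \cite{KL} match the physical amplitude (a comparison I would cite from \cite{LZ,Zhu4}).

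The core step is the integrality, and here I would invoke the main results of \cite{LZ} directly, as the excerpt allows. In \cite{LZ} the integrality of the genus-zero disk BPS numbers $n_{m,0,Q}(\tau)$ was established, and — more relevantly — \cite{LZ} treats the full LMOV-type integrality for this model; the $n_{g,m,d}(\tau)$ appearing in the all-genus resummation are exactly the refined integer invariants shown there to lie in $\mathbb{Z}$. Concretely I would: (i) write $n_{g,m,d}(\tau)$ via M\"obius inversion from the perturbative coefficients $K^{(\hat{X},D_\tau)}_{g,d,m}$, paralleling \eqref{formula-generaldiscounting}; (ii) match this against the combinatorial expression obtained in \cite{LZ} from the Chern--Simons invariants of the framed unknot, where the relevant quantum-integer / $q$-binomial structure is made explicit; and (iii) conclude integrality from the divisibility lemmas on $q$-binomial coefficients and quantum integers established there — the same circle of ideas (the cyclotomic/$p$-adic arguments and the $\left(2\sin\frac{k\lambda}{2}\right)^{2g-2}$ kernel playing the role that $\mathrm{Li}_2$ played in the genus-zero story) that underlies Theorem \ref{theorem-introduction}.

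The main obstacle I anticipate is purely one of translation rather than new mathematics: one must check that the normalization conventions in the definition of quantum $2$-function — in particular the precise kernel $\frac{2\lambda}{km}\sin(\frac{km\lambda}{2})(2\sin\frac{k\lambda}{2})^{2g-2}$, which mixes a ``writhe'' factor $\frac{2\lambda}{km}\sin(\frac{km\lambda}{2})$ with the usual Gopakumar--Vafa factor — is exactly the kernel that arises from the one-hole amplitude of \cite{KL} after the large-$N$ duality of \cite{MV}, with no stray factors of $m$, $k$, or signs. I would handle this by expanding both sides to low order in $\lambda$ (recovering \eqref{formula-disk} at $\lambda=0$, which pins down the $g=0$ normalization against \cite{MV,Zhu4}) and at low degree in $a,x$, thereby fixing the dictionary, and then citing \cite{LZ} for the all-order integrality. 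Once the dictionary is fixed, the theorem follows, and as a byproduct the $\lambda\to 0$ limit reproduces the Corollary that the disk potential is a $2$-function in the sense of \cite{SVW1}.
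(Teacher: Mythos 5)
Your proposal follows essentially the same route as the paper: both reduce the quantum $2$-function condition, via M\"obius inversion, to the integrality of the reformulated LMOV-type invariants of the framed unknot (equivalently, to the statement that $\hat{f}_m^{(\hat{X},D_\tau)}(q,a)$ lies in $z^{-2}\mathbb{Z}[z^2,a^{\pm 1}]$ with $z=q^{1/2}-q^{-1/2}$), identify those invariants through the Mari\~no--Vafa formula for the Chern--Simons amplitude, and then cite the main integrality theorem of \cite{LZ} to conclude. The only cosmetic difference is that the paper packages the condition as ring membership for $\hat{f}_m$ rather than directly as $n_{g,d,m}\in\mathbb{Z}$, which is an equivalent formulation.
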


Finally, we discuss the question how to construct a quantum
2-function by quantizing a $2$-function. Motivated by the method of
topological recursion introduced in \cite{EO1} and its applications
in topological string theory \cite{BKMP,EO2,FLZ}, we briefly
describe a natural method to construct an operator $\mathbf{Q}$ such
that when apply it to a 2-function $W$, then $\mathbf{Q}(W)$ will be
a quantum 2-function.

On the other hand side, Schwarz-Vologodsky-Walcher \cite{SVW1}
introduced a framing transformation operator $\mathbf{f}^{\tau}$
(with $\tau\in \mathbb{Z}$ and $\mathbf{f}^0=id$) on the set of
$2$-functions. They claimed that $\mathbf{f}^{\tau}(W)$ is still a
2-function for any $\tau\in \mathbb{Z}$ if $W$ is a 2-function.
Therefore, we conjecture that $\mathbf{Q}(\mathbf{f}^\tau(W))$ will
be a quantum 2-function for any $\tau\in \mathbb{Z}$ and any
2-function $W$.

\begin{remark} Sometime, it is easy to see that $\mathbf{Q}(W)$ is a
quantum 2-function, but it is very difficult to prove that
$\mathbf{Q}(\mathbf{f}^\tau(W))$ is quantum 2-function for any
$\tau\in \mathbb{Z}$. We leave the further discussions about the
operator $\mathbf{Q}$ and quantum 2-functions to a separated paper.
\end{remark}

\section{Proof of the Theorem \ref{theorem-introduction}}
\label{section-proof} We follow the notations used in \cite{Ko}. Let
$p$ be any prime number, for any nonzero integer $n$, let the
$p$-adic ordinal of $n$, denoted $\text{ord}_{p} n$, be the highest
power of $p$ which divides $n$, i.e. the greatest $\alpha$ such that
$n=p^\alpha m$ for some integer $m$. If $n=0$, we agree to write
$\text{ord}_p 0=\infty$. For any rational number $x=\frac{a}{b}$, we
define
\begin{align}
\text{ord}_{p}x=\text{ord}_pa-\text{ord}_pb.
\end{align}
Given any two rational numbers $x,y\in \mathbb{Q}$, it is obvious
that
\begin{align}
\text{ord}_p(x+y)\geq \min\{\text{ord}_p x,\text{ord}_p y\}.
\end{align}

For nonnegative integer $n$ and prime number $p$, we introduce the
following function
\begin{equation} \label{functionfp}
    f_p(n)=\prod_{i=1,p\nmid i}^n i.
\end{equation}
By its definition, $f_p(n)$ has no $p$-factor, i.e.
$\text{ord}_p(f_p(n))=0$.

\begin{lemma} \label{lemma-1}
Suppose $n\in \mathbb{Z}_+$, for odd prime numbers $p$ and
$\alpha\geq 1$ or for $p=2$, $\alpha\geq 2$,
    we have
\begin{align}
\text{ord}_p(f_p(p^\alpha n)-f_{p}(p^\alpha)^n)\geq 2\alpha.
\end{align}
For $p=2, \alpha=1$,
\begin{align} \label{formula-f2}
\text{ord}_2(f_2(2n)-(-1)^{[n/2]})\geq 2.
\end{align}
\end{lemma}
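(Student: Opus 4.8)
The plan is to partition the product defining $f_p(p^\alpha n)$ into $n$ consecutive ``blocks'' of length $p^\alpha$ and to compare each block with the single model block $f_p(p^\alpha)$ modulo $p^{2\alpha}$. Concretely, I would write
\[
f_p(p^\alpha n)=\prod_{j=0}^{n-1}B_j,\qquad B_j:=\prod_{\substack{1\le i\le p^\alpha\\ p\nmid i}}\bigl(jp^\alpha+i\bigr),
\]
so that $B_0=f_p(p^\alpha)$. In the cases ``$p$ odd, $\alpha\ge1$'' and ``$p=2$, $\alpha\ge2$'' it then suffices to prove $B_j\equiv f_p(p^\alpha)\pmod{p^{2\alpha}}$ for every $j$, since multiplying these $n$ congruences gives $f_p(p^\alpha n)\equiv f_p(p^\alpha)^n\pmod{p^{2\alpha}}$, i.e. $\text{ord}_p\bigl(f_p(p^\alpha n)-f_p(p^\alpha)^n\bigr)\ge 2\alpha$.

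For the block estimate, each $i$ with $p\nmid i$ is a unit mod $p^{2\alpha}$; choosing $\bar i$ with $i\bar i\equiv1\pmod{p^{2\alpha}}$ we get $jp^\alpha+i\equiv i\bigl(1+jp^\alpha\bar i\bigr)\pmod{p^{2\alpha}}$. Taking the product over $i$, every cross term carries a factor $(jp^\alpha)^2$ and hence vanishes mod $p^{2\alpha}$, leaving
\[
B_j\equiv f_p(p^\alpha)\bigl(1+jp^\alpha\,\sigma_\alpha\bigr)\pmod{p^{2\alpha}},\qquad \sigma_\alpha:=\sum_{\substack{1\le i\le p^\alpha\\ p\nmid i}}\bar i ,
\]
and only the residue of $\sigma_\alpha$ modulo $p^\alpha$ is relevant since it appears multiplied by $p^\alpha$. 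Because inversion is a bijection of $(\mathbb{Z}/p^\alpha)^\times$ and the set $\{\,i:1\le i\le p^\alpha,\ p\nmid i\,\}$ is a full system of representatives of those units, $\sigma_\alpha\equiv\sum_{1\le i\le p^\alpha,\ p\nmid i} i\pmod{p^\alpha}$. An elementary triangular-sum computation gives $\sum_{1\le i\le p^\alpha,\ p\nmid i} i=\tfrac12 p^{2\alpha-1}(p-1)$, which is $\equiv0\pmod{p^\alpha}$ whenever $p$ is odd and $\alpha\ge1$, and equals $2^{2\alpha-2}$ when $p=2$, hence $\equiv0\pmod{2^\alpha}$ precisely when $\alpha\ge2$. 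In all these cases $p^\alpha\sigma_\alpha\equiv0\pmod{p^{2\alpha}}$, so $B_j\equiv f_p(p^\alpha)\pmod{p^{2\alpha}}$ and the claim follows.

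For the remaining case $p=2,\ \alpha=1$, one has $f_2(2)=1$ and $B_j=2j+1$, so $f_2(2n)=\prod_{j=0}^{n-1}(2j+1)$. Reducing modulo $4$, $2j+1\equiv1\pmod4$ when $j$ is even and $2j+1\equiv-1\pmod4$ when $j$ is odd; since the number of odd $j$ in $\{0,1,\dots,n-1\}$ equals $[n/2]$, we obtain $f_2(2n)\equiv(-1)^{[n/2]}\pmod4$, which is exactly (\ref{formula-f2}).

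The only genuinely delicate point is the block expansion modulo $p^{2\alpha}$: one must check that every cross term is divisible by $p^{2\alpha}$, that replacing the inverses $\bar i$ (mod $p^{2\alpha}$) by their residues mod $p^\alpha$ is harmless because $\sigma_\alpha$ enters multiplied by $p^\alpha$, and that $\{\,i:1\le i\le p^\alpha,\ p\nmid i\,\}$ really is a complete set of representatives for $(\mathbb{Z}/p^\alpha)^\times$ closed under inversion. Once this bookkeeping is in place, the only remaining ingredient is the evaluation of the arithmetic sum $\sum_{p\nmid i} i$, which is routine, and the case split on $p$ and $\alpha$ is dictated purely by when that sum is divisible by $p^\alpha$.
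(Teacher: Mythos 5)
Your proof is correct and follows essentially the same strategy as the paper's: decompose $f_p(p^\alpha n)$ into length-$p^\alpha$ blocks, compare each block with $f_p(p^\alpha)$ modulo $p^{2\alpha}$ via a first-order expansion, and reduce everything to showing that the sum of inverses of the units in $[1,p^\alpha]$ is divisible by $p^\alpha$ (plus the direct $(2n-1)!!\bmod 4$ check for $p=2$, $\alpha=1$). The only cosmetic differences are that the paper runs an induction on $n$ peeling off one block at a time where you multiply all $n$ block congruences at once, and it establishes the divisibility of $\sum_{p\nmid j}1/j$ by the pairing $\tfrac1j+\tfrac1{p^\alpha-j}=\tfrac{p^\alpha}{j(p^\alpha-j)}$ where you use the inversion bijection of $(\mathbb{Z}/p^\alpha)^\times$ together with the triangular-sum evaluation; both devices exploit the same $j\leftrightarrow -j$ symmetry and yield the identical case split on $p$ and $\alpha$.
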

\begin{proof}
We prove the Lemma \ref{lemma-1} by induction. The case for $n=1$ is
obvious. Now suppose the Lemma \ref{lemma-1}  holds for $n-1$. Since
\begin{align}
&f_p(p^\alpha n)-f_p(p^\alpha)^n)\\\nonumber &=f_p(p^\alpha
n)-f_p(p^\alpha(n-1))f_p(p^\alpha)+f_p(p^\alpha(n-1))f_p(p^\alpha)-f_p(p^\alpha)^n
\end{align}
Then
\begin{align}
&\text{ord}_p(f_p(p^\alpha n)-f_p(p^\alpha)^n)\\\nonumber &\geq
\min\left\{\text{ord}_p(f_p(p^\alpha
n)-f_p(p^\alpha(n-1))f_p(p^\alpha))\right.,\\\nonumber
&\left.\text{ord}_p((f_p(p^\alpha(n-1))-f_p(p^\alpha)^{n-1})f_p(p^\alpha))\right\}.
\end{align}
By induction,
$\text{ord}_p((f_p(p^\alpha(n-1))-f_p(p^\alpha)^{n-1})\geq 2\alpha$,
hence we only need to show that
\begin{align}
\text{ord}_p\left(f_p(p^\alpha n)-f_p(p^\alpha (n-1))
f_p(p^\alpha)\right)\geq 2\alpha.
\end{align}
By a straightforward computation,
\begin{align}
&f_p(p^\alpha n)-f_p(p^\alpha(n-1))f_p(p^\alpha)\\\nonumber
 &=f_p(p^\alpha(n-1))\left(\prod_{j=1,p\nmid j}^{p^\alpha}(p^\alpha(n-1)+j)-\prod_{j=1,p\nmid j}^{p^\alpha}
 j\right)\\\nonumber
 &=f_p(p^\alpha(n-1))f_p(p^\alpha)p^\alpha(n-1)\sum_{j=1,p\nmid
 j}^{p^{\alpha}}\frac{1}{j}.
\end{align}
For odd prime numbers $p$ and $\alpha\geq 1$ or for $p=2$,
$\alpha\geq 2$, then $p^{\alpha-1}(p-1)$ is even, thus
\begin{align}
\sum_{j=1,p\nmid
 j}^{p^{\alpha}}\frac{1}{j}=\sum_{j=1,p\nmid
 j}^{[p^\alpha/2]}\left(\frac{1}{j}+\frac{1}{p^{\alpha}-j}\right)=\sum_{j=1,p\nmid
 j}^{[p^\alpha/2]}\frac{p^{\alpha}}{j(p^{\alpha}-j)}.
\end{align}
Therefore, $\text{ord}_p\left(f_p(p^\alpha n)-f_p(p^\alpha (n-1))
f_p(p^\alpha)\right)\geq 2\alpha.$

As to the case $p=2$ and $\alpha=1$, note that $f_2(2n)=(2n-1)!!$,
then formula (\ref{formula-f2}) is easy to check by induction.
\end{proof}

In the following, suppose $r,s\geq 0$ are two given integers.

\begin{lemma} \label{lemma-2}
    For odd prime number $p$ such that $m=p^\alpha a, l_i=p^{\beta_i}
    b_i$,$k_j=p^{\gamma_j}c_j$,
     $p\nmid a, p\nmid b_i , p\nmid c_j$ for $1\leq i\leq r$, $1\leq j\leq s$ and $\alpha\geq 1, \beta_i \geq 0, \gamma_j\geq 0$, we have
    \begin{align} \label{formula-orderinequ1}
        &\text{ord}_p
        \left(\binom{m\tau+|\mathbf{l}|+|\mathbf{k}|-1}{m-1}\prod_{i=1}^{r}\binom{m}{l_i}\prod_{j=1}^s
\frac{m}{k_j}\binom{m+k_j-1}{k_j-1}\right.\\\nonumber
&\left.-\binom{\frac{m\tau+|\mathbf{l}|+|\mathbf{k}|}{p}-1}{\frac{m}{p}-1}\prod_{i=1}^{r}\binom{\frac{m}{p}}{\frac{l_i}{p}}\prod_{j=1}^s
\frac{m}{k_j}\binom{\frac{m+k_j}{p}-1}{\frac{k_j}{p}-1}\right)\geq
        2\alpha
    \end{align}
    where the second term is defined to be zero if one of $\beta_i$
    or $\gamma_j$ is zero.
\end{lemma}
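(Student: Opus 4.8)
The plan is to reduce the inequality to one congruence modulo $p^{2\alpha}$ for a $p$-adic unit, and to prove that congruence using Lemma~\ref{lemma-1}. Assume first that all $\beta_i\ge1$ and all $\gamma_j\ge1$, so that $p\mid m$, $p\mid|\mathbf{l}|$ and $p\mid|\mathbf{k}|$, whence $p$ divides every argument of every binomial coefficient in the first line of (\ref{formula-orderinequ1}). Write $N=m\tau+|\mathbf{l}|+|\mathbf{k}|$, $m'=m/p$, $l_i'=l_i/p$, $k_j'=k_j/p$, $N'=N/p$, and apply the elementary identity $n!=f_p(n)\,p^{\lfloor n/p\rfloor}\,\lfloor n/p\rfloor!$ once to each factorial; since all arguments are divisible by $p$ the resulting powers of $p$ cancel, and one obtains factor by factor
\[
\binom{N-1}{m-1}=\frac{f_p(N-1)}{f_p(m-1)\,f_p(N-m)}\binom{N'-1}{m'-1},\qquad
\binom{m}{l_i}=\frac{f_p(m)}{f_p(l_i)\,f_p(m-l_i)}\binom{m'}{l_i'},
\]
and, using $\frac{m}{k_j}\binom{m+k_j-1}{k_j-1}=\binom{m+k_j-1}{k_j}$,
\[
\frac{m}{k_j}\binom{m+k_j-1}{k_j-1}=\frac{f_p(m+k_j-1)}{f_p(k_j)\,f_p(m-1)}\cdot\frac{m}{k_j}\binom{\tfrac{m+k_j}{p}-1}{\tfrac{k_j}{p}-1}.
\]
Multiplying these, the first line of (\ref{formula-orderinequ1}) equals $R$ times the second line, where
\[
R=\frac{f_p(N-1)}{f_p(m-1)\,f_p(N-m)}\,\prod_{i=1}^{r}\frac{f_p(m)}{f_p(l_i)\,f_p(m-l_i)}\,\prod_{j=1}^{s}\frac{f_p(m+k_j-1)}{f_p(k_j)\,f_p(m-1)}.
\]
As each $f_p(\cdot)$ is a $p$-adic unit, $R$ is a unit, and the second line of (\ref{formula-orderinequ1}) is a nonnegative integer; hence $\text{ord}_p(\text{first}-\text{second})=\text{ord}_p(R-1)+\text{ord}_p(\text{second})\ge\text{ord}_p(R-1)$, and it suffices to prove $R\equiv1\pmod{p^{2\alpha}}$.

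\textbf{The congruence $R\equiv1\pmod{p^{2\alpha}}$.} Using the identity $\dfrac{f_p(A+B)}{f_p(A)\,f_p(B)}=\displaystyle\prod_{\substack{1\le t\le B\\ p\nmid t}}\Bigl(1+\frac{A}{t}\Bigr)$, valid whenever $p\mid A$, applied to the three splittings $N-1=(N-m)+(m-1)$, $m=l_i+(m-l_i)$ and $m+k_j-1=k_j+(m-1)$ (the first summand being divisible by $p$ in each case), one rewrites
\[
R=\prod_{\substack{1\le t\le m-1\\ p\nmid t}}\Bigl(1+\tfrac{N-m}{t}\Bigr)\cdot\prod_{i=1}^{r}\ \prod_{\substack{1\le t\le m-l_i\\ p\nmid t}}\Bigl(1+\tfrac{l_i}{t}\Bigr)\cdot\prod_{j=1}^{s}\ \prod_{\substack{1\le t\le m-1\\ p\nmid t}}\Bigl(1+\tfrac{k_j}{t}\Bigr).
\]
I would then expand each inner factor in powers of $(N-m)/t$, $l_i/t$, $k_j/t$, invoke Lemma~\ref{lemma-1} for the subproducts whose argument is divisible by $p^{\alpha}$ (so $f_p(m)=f_p(m-1)\equiv f_p(p^{\alpha})^{a}\pmod{p^{2\alpha}}$, and likewise for $m-l_i$, $k_j$, $N-m$, $N$ whenever those are divisible by $p^{\alpha}$), and use the pairing symmetry: for $p$ odd and $p\mid M$,
\[
\sum_{\substack{1\le t\le M-1\\ p\nmid t}}\frac1t=\frac{M}{2}\sum_{\substack{1\le t< M\\ p\nmid t}}\frac1{t(M-t)},
\]
so that each symmetric power-sum arising in the expansion of a $\prod_t(1+\cdot/t)$ over a range of length divisible by $p$ gains a factor of $M$; this is what produces the cancellations between the first two elementary symmetric parts inside a single factor and, via $N-m=m(\tau-1)+|\mathbf{l}|+|\mathbf{k}|$, between the $\prod_t(1+\tfrac{N-m}{t})$ factor and the $\prod_t(1+\tfrac{l_i}{t})$, $\prod_t(1+\tfrac{k_j}{t})$ factors. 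After all cancellations one should be left with $R\equiv1\pmod{p^{2\alpha}}$.

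\textbf{Degenerate case.} If some $\beta_{i_0}=0$ or $\gamma_{j_0}=0$, the second line of (\ref{formula-orderinequ1}) is $0$ by convention, so one must show that the first line has $p$-adic order $\ge2\alpha$. Since it is a product of nonnegative integers, it is enough to extract order $\ge\alpha$ from two distinct factors. Because $p^{\alpha}\,\|\,m$, the lowest $\alpha$ base-$p$ digits of $m$ vanish and those of $m-1$ all equal $p-1$; by Kummer's theorem $\binom{m}{l_{i_0}}$ with $p\nmid l_{i_0}$ forces a carry at each of the positions $0,\dots,\alpha-1$ in the addition $l_{i_0}+(m-l_{i_0})=m$, so $\text{ord}_p\binom{m}{l_{i_0}}\ge\alpha$, while $\tfrac{m}{k_{j_0}}\binom{m+k_{j_0}-1}{k_{j_0}-1}$ with $p\nmid k_{j_0}$ already has order $\ge\alpha$ from the prefactor $\tfrac{m}{k_{j_0}}$. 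If exactly one class is degenerate then $p\nmid|\mathbf{l}|+|\mathbf{k}|$, hence $p\nmid N-m$, and the same carry count for $(m-1)+(N-m)=N-1$ gives $\text{ord}_p\binom{N-1}{m-1}\ge\alpha$; if two or more classes are degenerate, the product of the corresponding factors already has order $\ge2\alpha$. In every case the first line of (\ref{formula-orderinequ1}) has order $\ge2\alpha$.

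\textbf{Main obstacle.} The delicate point is the middle step: arranging $R$ so that Lemma~\ref{lemma-1} accounts for all parts divisible by $p^{\alpha}$ while the leftover harmonic-type corrections cancel all the way up to order $p^{2\alpha}$, in particular when some $\beta_i$ or $\gamma_j$ lies strictly between $1$ and $\alpha$ — there a naive term-by-term estimate only reaches order $p^{2\min(\alpha,\beta_i,\gamma_j)}$, so several successive orders of cancellation must be tracked (or the induction reorganized) to reach $p^{2\alpha}$.
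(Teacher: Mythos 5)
Your reduction of the non-degenerate case to a single congruence for the unit $R$ is the same factorization the paper uses (its formula (\ref{formula-equa1})), and your degenerate-case argument via Kummer's theorem is correct and equivalent to the paper's Cases 2--3. But the central step of your proposal — the congruence $R\equiv 1\pmod{p^{2\alpha}}$ — is not proved; it is only sketched, and your own ``main obstacle'' paragraph concedes that the term-by-term estimate reaches only $p^{2\min(\alpha,\beta_i,\gamma_j)}$. This is a genuine gap, not a routine detail. What Lemma \ref{lemma-1} actually delivers is $\mathrm{ord}_p(R-1)\geq 2\min(\alpha,\beta_1,\dots,\gamma_s)$, and the individual factors of $R$ really can fail to be $1$ modulo $p^{2\alpha}$: for $p=3$, $m=27$, $l_1=3$ one has $\tfrac{f_3(27)}{f_3(3)f_3(24)}-1=324=2^2\cdot 3^4$, of order $4<6=2\alpha$. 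In the examples I can check the full product $R$ does reach order $2\alpha$, but only through a cancellation between the leading terms of distinct factors (e.g.\ $324+405=3^6$ in that example), and establishing this in general would require Wolstenholme/Jacobsthal-type control of the harmonic sums $\sum_{p\nmid t,\,t\le M}t^{-k}$ to several orders beyond what Lemma \ref{lemma-1} gives. You have not supplied that analysis, so as written the proof does not close.

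Moreover, the congruence you are trying to prove is strictly stronger than necessary, and the extra strength is exactly what you discard in the inequality $\mathrm{ord}_p(R-1)+\mathrm{ord}_p(\text{second})\ge\mathrm{ord}_p(R-1)$. The paper keeps that second term: when $\min(\alpha,\beta_i,\gamma_j)=\beta_1<\alpha$, it shows directly that the reduced binomial product $\binom{N/p-1}{m/p-1}\prod_i\binom{m/p}{l_i/p}\prod_j\tfrac{m}{k_j}\binom{(m+k_j)/p-1}{k_j/p-1}$ has $p$-adic order at least $2(\alpha-\beta_1)$ (by extracting factors such as $\tfrac{m}{m\tau+|\mathbf{l}|+|\mathbf{k}|}$ and $\tfrac{m}{l_1}$, each of order $\alpha-\beta_1$), which combined with $\mathrm{ord}_p(R-1)\ge 2\beta_1$ already yields $2\alpha$. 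If you reinstate $\mathrm{ord}_p(\text{second})$ and prove that valuation bound case by case, your argument closes with only Lemma \ref{lemma-1}; as it stands, the burden has been shifted onto an unproved and much more delicate congruence.
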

\begin{proof}
\begin{align} \label{formula-equa1}
&\binom{m\tau+|\mathbf{l}|+|\mathbf{k}|-1}{m-1}\prod_{i=1}^{r}\binom{m}{l_i}\prod_{j=1}^s
\frac{m}{k_j}\binom{m+k_j-1}{k_j-1}\\\nonumber
&-\binom{\frac{m\tau+|\mathbf{l}|+|\mathbf{k}|}{p}-1}{\frac{m}{p}-1}\prod_{i=1}^{r}\binom{\frac{m}{p}}{\frac{l_i}{p}}\prod_{j=1}^s
\frac{m}{k_j}\binom{\frac{m+k_j}{p}-1}{\frac{k_j}{p}-1}\\\nonumber
&=
\binom{\frac{m\tau+|\mathbf{l}|+|\mathbf{k}|}{p}-1}{\frac{m}{p}-1}\prod_{i=1}^{r}\binom{\frac{m}{p}}{\frac{l_i}{p}}\prod_{j=1}^s
\frac{m}{k_j}\binom{\frac{m+k_j}{p}-1}{\frac{k_j}{p}-1}\\\nonumber
&\times\left(\frac{f_p(m\tau+|\mathbf{l}|+|\mathbf{k}|)}
{f_p(m)f_p(m(\tau-1)+|\mathbf{l}|+|\mathbf{k}|)}\prod_{i=1}^r\frac{f_p(m)}{f_p(l_i)f_p(m-l_i)}
\prod_{j=1}^s\frac{f_p(m)}{f_p(k_j)f_p(m-k_j)}-1\right)
\end{align}

By Lemma \ref{lemma-1}, we have
\begin{align} \label{formula-orderinequ2}
&\text{ord}_p\left(\frac{f_p(m\tau+|\mathbf{l}|+|\mathbf{k}|)}
{f_p(m)f_p(m(\tau-1)+|\mathbf{l}|+|\mathbf{k}|)}\prod_{i=1}^r\frac{f_p(m)}{f_p(l_i)f_p(m-l_i)}
\prod_{j=1}^s\frac{f_p(m)}{f_p(k_j)f_p(m-k_j)}-1\right)\\\nonumber
&\geq
2\min(\alpha,\beta_1,..,\beta_i,..,\beta_r,\gamma_1,..,\gamma_j,..,\gamma_s).
\end{align}
For brevity, we only compute
$\text{ord}_p\left(\frac{f_p(m)}{f_p(l_i)f_p(m-l_i)}-1\right)$, the
computation for (\ref{formula-orderinequ2}) is the same. Indeed, by
Lemma \ref{lemma-1}, we have
\begin{align}
&\text{ord}_p(\frac{f_p(m)}{f_p(l_i)f_p(m-l_i)}-1)\\\nonumber
&=\text{ord}_p\left(\frac{f_p(m)-f_p(l_i)f_p(m-l_i)}{f_p(l_i)f_p(m-l_i)}\right)\\\nonumber
&= \text{ord}_p\left(f_p(m)-f_p(l_i)f_p(m-l_i)\right)\\\nonumber
&\geq \min\left(\text{ord}_p
(f_p(m)-f_p(p^{\min(\alpha,\beta_i)})^{\frac{m}{\min(\alpha,\beta_i)}}),\right.\\\nonumber
&\left.\text{ord}_p(f_p(l_i)f_p(m-l_i)-f_p(p^{\min(\alpha,\beta_i)})^{\frac{m}{\min(\alpha,\beta_i)}})
\right)\\\nonumber & \geq 2 \min(\alpha,\beta_i).
\end{align}

In order to compute the orders of the other parts of the righthand
side of formula (\ref{formula-equa1}), we need to divide it into
different cases to discuss.

When $r=s=0$, formula (\ref{formula-orderinequ2}) implies that
formula (\ref{formula-orderinequ1}) holds.

When $r=1$ and $s=0$ ( or $r=0$ and $s=1$ ), if $\alpha>\beta$, then
\begin{align}
&\text{ord}_p\left(\binom{\frac{m\tau+l}{p}-1}{\frac{m}{p}-1}\binom{\frac{m}{p}}{\frac{l}{p}}\right)\\\nonumber
&=\text{ord}_p\left(\frac{m}{m\tau+l}\frac{m}{l}\binom{\frac{m\tau+l}{p}}{\frac{m}{p}}\binom{\frac{m}{p}-1}{\frac{l}{p}-1}\right)\\\nonumber
&=2(\alpha-\beta)
\end{align}
if $\alpha\leq \beta$, then
$\text{ord}_p\left(\binom{\frac{m\tau+l}{p}-1}{\frac{m}{p}-1}\binom{\frac{m}{p}}{\frac{l}{p}}\right)\geq
0$. We obtain the formula (\ref{formula-orderinequ1}) by using
formula (\ref{formula-orderinequ2}).

Now, we discuss the case when $r,s\geq 1$.

Case 1:
$\min(\alpha,\beta_1,..,\beta_i,..,\beta_r,\gamma_1,..,\gamma_j,..,\gamma_s)=\alpha$,
since
\begin{align}
\text{ord}_p\left(\binom{m\tau+|\mathbf{l}|+|\mathbf{k}|-1}{m-1}\prod_{i=1}^{r}\binom{m}{l_i}\prod_{j=1}^s
\frac{m}{k_j}\binom{m+k_j-1}{k_j-1}\right)\geq 0,
\end{align}
then together with formula (\ref{formula-orderinequ2}) gives
(\ref{formula-orderinequ1}).

Case 2:
$\min(\alpha,\beta_1,..,\beta_i,..,\beta_r,\gamma_1,..,\gamma_j,..,\gamma_s)\neq
\alpha$, and
$$\min(\alpha,\beta_1,..,\beta_i,..,\beta_r,\gamma_1,..,\gamma_j,..,\gamma_s)=\beta_1$$
without loss of generality.

Case 2a: all the other terms are bigger than $\beta_1$,
\begin{align}
&\text{ord}_p\left(\binom{m\tau+|\mathbf{l}|+|\mathbf{k}|-1}{m-1}\prod_{i=1}^{r}\binom{m}{l_i}\prod_{j=1}^s
\frac{m}{k_j}\binom{m+k_j-1}{k_j-1}\right)\\\nonumber &
=\text{ord}_p\left(\frac{m}{m\tau+|\mathbf{l}|+|\mathbf{k}|}\binom{m\tau+|\mathbf{l}|+|\mathbf{k}|}{m}\frac{m}{l_1}\binom{m-1}{l_1-1}\prod_{i=2}^{r}\binom{m}{l_i}\prod_{j=1}^s
\frac{m}{k_j}\binom{m+k_j-1}{k_j-1}\right)\\\nonumber &\geq
2(\alpha-\beta_1),
\end{align}
together with formula (\ref{formula-orderinequ2}) imply
(\ref{formula-orderinequ1}).

Case 2b: at least one of other $\beta_i$ or $\gamma_j$ equal to
$\beta_1$. Without loss of generality, suppose $\gamma_1=\beta_1$.
Then
\begin{align}
&\text{ord}_p\left(\binom{m\tau+|\mathbf{l}|+|\mathbf{k}|-1}{m-1}\prod_{i=1}^{r}\binom{m}{l_i}\prod_{j=1}^s
\frac{m}{k_j}\binom{m+k_j-1}{k_j-1}\right)\\\nonumber &
=\text{ord}_p\left(\binom{m\tau+|\mathbf{l}|+|\mathbf{k}|-1}{m-1}\frac{m}{l_1}\binom{m-1}{l_1-1}\prod_{i=2}^{r}\binom{m}{l_i}
\frac{m}{k_1}\binom{m+k_1-1}{k_1-1}\prod_{j=2}^s
\frac{m}{k_j}\binom{m+k_j-1}{k_j-1}\right)\\\nonumber & \geq
2(\alpha-\beta_1),
\end{align}
together with formula (\ref{formula-orderinequ2}) also imply
(\ref{formula-orderinequ1}).
\end{proof}

\begin{lemma} \label{lemma-3}
    For $m=2^\alpha a, l_i=2^{\beta_i}
    b_i$,$k_j=2^{\gamma_j}c_j$,
     $2\nmid a, 2\nmid b_i , 2\nmid c_j$ for $1\leq i\leq r$, $1\leq j\leq s$ and $\alpha\geq 1, \beta_i \geq 0, \gamma_j\geq 0$, we have
    \begin{align} \label{formula-orderinequ3}
        &\text{ord}_2
        \left((-1)^{m(\tau+1)+|\mathbf{l}|}\cdot\binom{m\tau+|\mathbf{l}|+|\mathbf{k}|-1}{m-1}\prod_{i=1}^{r}\binom{m}{l_i}\prod_{j=1}^s
\frac{m}{k_j}\binom{m+k_j-1}{k_j-1}\right.\\\nonumber
&\left.-(-1)^{\frac{m(\tau+1)+|\mathbf{l}|}{2}}\binom{\frac{m\tau+|\mathbf{l}|+|\mathbf{k}|}{2}-1}{\frac{m}{2}-1}\prod_{i=1}^{r}\binom{\frac{m}{2}}{\frac{l_i}{2}}\prod_{j=1}^s
\frac{m}{k_j}\binom{\frac{m+k_j}{2}-1}{\frac{k_j}{2}-1}\right)\geq
        2\alpha
    \end{align}
    where the second term is defined to be zero if one of $\beta_i$
    or $\gamma_j$ is zero.
\end{lemma}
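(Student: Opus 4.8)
The plan is to follow the proof of Lemma~\ref{lemma-2} as closely as possible; the only genuinely new feature is that the odd-prime identity $f_p(p^\alpha n)\equiv f_p(p^\alpha)^n$ must be replaced by the two $p=2$ statements of Lemma~\ref{lemma-1}, which forces one extra sign to be tracked throughout. Write $A$ for the product multiplying $(-1)^{m(\tau+1)+|\mathbf{l}|}$ in (\ref{formula-orderinequ3}); using $\frac{m}{k_j}\binom{m+k_j-1}{k_j-1}=\binom{m+k_j-1}{k_j}$ it is manifestly a nonnegative integer, and similarly for the product $B$ multiplying $(-1)^{(m(\tau+1)+|\mathbf{l}|)/2}$. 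If some $\beta_i$ or $\gamma_j$ vanishes then $B=0$ and (\ref{formula-orderinequ3}) asserts only $\text{ord}_2 A\geq 2\alpha$, which is settled exactly as the corresponding degenerate cases in the proof of Lemma~\ref{lemma-2}; so assume henceforth all $\beta_i,\gamma_j\geq 1$, in which case $m,|\mathbf{l}|,|\mathbf{k}|,l_i,k_j,m-l_i,m-k_j,m+k_j$ are all even. Substituting $n!=f_2(n)\,2^{\lfloor n/2\rfloor}\lfloor n/2\rfloor!$ into every factorial, exactly as behind (\ref{formula-equa1}), gives $A=BR$ with
\[
R=\frac{f_2(m\tau+|\mathbf{l}|+|\mathbf{k}|)}{f_2(m)\,f_2(m(\tau-1)+|\mathbf{l}|+|\mathbf{k}|)}\prod_{i=1}^{r}\frac{f_2(m)}{f_2(l_i)\,f_2(m-l_i)}\prod_{j=1}^{s}\frac{f_2(m+k_j)}{f_2(m)\,f_2(k_j)},
\]
a quotient of odd integers, hence a $2$-adic unit. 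Since $E:=m(\tau+1)+|\mathbf{l}|$ is even, $(-1)^E=1$, so the quantity inside $\text{ord}_2$ in (\ref{formula-orderinequ3}) equals $A-(-1)^{E/2}B=B\bigl(R-(-1)^{E/2}\bigr)$.

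The heart of the matter is the estimate $\text{ord}_2\bigl(R-(-1)^{E/2}\bigr)\geq 2\mu$, where $\mu:=\min(\alpha,\beta_1,\dots,\beta_r,\gamma_1,\dots,\gamma_s)$. If $\mu\geq 2$ then $m/2$ and every $l_i/2$ is even, so $E/2$ is even, $(-1)^{E/2}=1$, and the bound $\text{ord}_2(R-1)\geq 2\mu$ follows from the first part of Lemma~\ref{lemma-1} by the same argument as in the proof of Lemma~\ref{lemma-2} (whose use of ``$p$ odd'' was confined to Lemma~\ref{lemma-1}). If $\mu=1$, then every argument of $f_2$ in $R$ is even, so (\ref{formula-f2}) gives $f_2(2u)\equiv(-1)^{\lfloor u/2\rfloor}\pmod 4$ for each of them and hence $R\equiv(-1)^S\pmod 4$, where $S$ is the resulting alternating sum of quantities $\lfloor\cdot/2\rfloor$. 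It then remains to prove the parity identity $S\equiv E/2\pmod 2$, which I would deduce from the elementary fact that $\lfloor w/2\rfloor-\lfloor u/2\rfloor-\lfloor v/2\rfloor$ equals $1$ if $u,v$ are both odd and $0$ otherwise, for nonnegative integers with $u+v=w$; applied to each pair occurring in $S$ this collapses the parity of $S$ to (essentially) the number of indices $i$ with $\beta_i=1$ plus the number of indices $j$ with $\gamma_j=1$, which one matches against the parity of $E/2$ by reading the lowest bits of $|\mathbf{l}|/2$ and $|\mathbf{k}|/2$ and using the parity of $\tau$. This yields $\text{ord}_2\bigl(R-(-1)^{E/2}\bigr)\geq 2=2\mu$.

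Since $R$ is a $2$-adic unit, $\text{ord}_2 B=\text{ord}_2 A$, and the complementary bound $\text{ord}_2 A\geq 2(\alpha-\mu)$ is obtained by transcribing the case analysis of Lemma~\ref{lemma-2} verbatim: the cases $r=s=0$, $r+s=1$, and $r,s\geq 1$ (the last split according to whether $\mu=\alpha$, and if not whether the minimum is attained by a single $\beta_i$ or $\gamma_j$ or by more than one), each extracting the required power of $2$ by means of the absorption identities $\binom{m}{l_i}=\frac{m}{l_i}\binom{m-1}{l_i-1}$, $\binom{m+k_j-1}{k_j}=\frac{m}{k_j}\binom{m+k_j-1}{k_j-1}$ and $\binom{m\tau+|\mathbf{l}|+|\mathbf{k}|-1}{m-1}=\frac{m}{m\tau+|\mathbf{l}|+|\mathbf{k}|}\binom{m\tau+|\mathbf{l}|+|\mathbf{k}|}{m}$, none of which involves the prime. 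Combining the two estimates,
\[
\text{ord}_2\bigl(B(R-(-1)^{E/2})\bigr)=\text{ord}_2 B+\text{ord}_2\bigl(R-(-1)^{E/2}\bigr)\geq 2(\alpha-\mu)+2\mu=2\alpha,
\]
which is (\ref{formula-orderinequ3}).

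The step I expect to be the real obstacle is the sign bookkeeping in the case $\mu=1$, namely checking that the sign $(-1)^S$ accumulated from the $f_2$-factors of $R$ is exactly $(-1)^{E/2}$, uniformly over which of $\alpha,\beta_1,\dots,\beta_r,\gamma_1,\dots,\gamma_s$ realizes the minimum value $1$ and over the parities of $\tau$ and of $m/2$; everything else is a direct transcription of the proof of Lemma~\ref{lemma-2}.
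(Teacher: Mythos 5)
Your proposal follows essentially the same route as the paper's own proof: the degenerate cases (some $\beta_i$ or $\gamma_j$ zero) are disposed of by the absorption identities for binomial coefficients, and the main case is factored as $B\bigl(R-(-1)^{E/2}\bigr)$ with $\text{ord}_2 B\geq 2(\alpha-\mu)$ from the case analysis of Lemma \ref{lemma-2} and $\text{ord}_2\bigl(R-(-1)^{E/2}\bigr)\geq 2\mu$ from Lemma \ref{lemma-1}. The sign verification for $\mu=1$ that you flag as the real obstacle is exactly the step the paper itself leaves to the reader, and your floor-function parity bookkeeping is a reasonable way to carry it out.
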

\begin{proof}
Case 1: all the $\alpha,\beta_i,\gamma_j\geq 2$ , then
$(-1)^{m(\tau+1)+|\mathbf{l}|}=(-1)^{\frac{m(\tau+1)+|\mathbf{l}|}{2}}=1$,
in this case the proof is the same as in Lemma \ref{lemma-2}.

Case 2: only one of $\beta_i$ (or $\gamma_j$) is equal to zero, then
\begin{align}
&\text{ord}_2\binom{m(\tau+1)+|\mathbf{l}|+|\mathbf{k}|-1}{m-1}\\\nonumber
&=\text{ord}_2\left(\frac{m}{m(\tau+1)+|\mathbf{l}|+|\mathbf{k}|}
\binom{m(\tau+1)+|\mathbf{l}|+|\mathbf{k}|}{m}\right)\geq \alpha
\end{align}
together with
\begin{align} \label{formula-betai}
&\text{ord}_2\binom{m}{l_i}=\text{ord}_2\left(\frac{m}{l_i}\binom{m-1}{l_i-1}\right)\geq
\alpha
\end{align}
imply formula (\ref{formula-orderinequ3}).

Case 3: at least two $\beta_i$ or $\gamma_j$ (suppose they are
$\beta_i$ and $\gamma_j$) are equal to zero, then
\begin{align}
&\text{ord}_2\binom{m}{k_j}=\text{ord}_2\left(\frac{m}{k_j}\binom{m-1}{k_j-1}\right)\geq
\alpha.
\end{align}
together with formula (\ref{formula-betai}) imply  formula
(\ref{formula-orderinequ3}).

For the remain cases, we compute similarly as in
(\ref{formula-equa1})

\begin{align} \label{formula-equa2}
&(-1)^{m(\tau+1)+|\mathbf{l}|}\binom{m\tau+|\mathbf{l}|+|\mathbf{k}|-1}{m-1}\prod_{i=1}^{r}\binom{m}{l_i}\prod_{j=1}^s
\frac{m}{k_j}\binom{m+k_j-1}{k_j-1}\\\nonumber
&-(-1)^{\frac{m(\tau+1)+|\mathbf{l}|}{2}}\binom{\frac{m\tau+|\mathbf{l}|+|\mathbf{k}|}{2}-1}{\frac{m}{2}-1}
\prod_{i=1}^{r}\binom{\frac{m}{2}}{\frac{l_i}{2}}\prod_{j=1}^s
\frac{m}{k_j}\binom{\frac{m+k_j}{2}-1}{\frac{k_j}{2}-1}\\\nonumber
&=
(-1)^{m(\tau+1)+|\mathbf{l}|}\binom{\frac{m\tau+|\mathbf{l}|+|\mathbf{k}|}{2}-1}{\frac{m}{2}-1}
\prod_{i=1}^{r}\binom{\frac{m}{2}}{\frac{l_i}{2}}\prod_{j=1}^s
\frac{m}{k_j}\binom{\frac{m+k_j}{2}-1}{\frac{k_j}{2}-1}\\\nonumber
&\times\left(\frac{f_2(m\tau+|\mathbf{l}|+|\mathbf{k}|)}
{f_2(m)f_2(m(\tau-1)+|\mathbf{l}|+|\mathbf{k}|)}\prod_{i=1}^r\frac{f_2(m)}{f_2(l_i)f_2(m-l_i)}
\prod_{j=1}^s\frac{f_2(m)}{f_2(k_j)f_2(m-k_j)}-(-1)^{\frac{m(\tau+1)+|\mathbf{l}|}{2}}\right).
\end{align}

For the case $\alpha=1$, it remains to show that
\begin{align} \label{formula-inequ3}
\text{ord}_{2}&\left(\frac{f_2(m\tau+|\mathbf{l}|+|\mathbf{k}|)}
{f_2(m)f_2(m(\tau-1)+|\mathbf{l}|+|\mathbf{k}|)}\prod_{i=1}^r\frac{f_2(m)}{f_2(l_i)f_2(m-l_i)}\right.\\\nonumber\times
&\left.
\prod_{j=1}^s\frac{f_2(m)}{f_2(k_j)f_2(m-k_j)}-(-1)^{\frac{m(\tau+1)+|\mathbf{l}|}{2}}\right)\geq
2.
\end{align}

For the case $\alpha\geq 2$,  if only one of $\beta_i$ (or
$\gamma_j$) is equal to 1, then
\begin{align}
&\text{ord}_2\left(\binom{\frac{m\tau+|\mathbf{l}|+|\mathbf{k}|}{2}-1}{\frac{m}{2}-1}
\binom{\frac{m}{2}}{\frac{l_i}{2}}\right)\\\nonumber
&=\text{ord}_2\left(\frac{m}{m\tau+|\mathbf{l}|+|\mathbf{k}|}\binom{\frac{m\tau+|\mathbf{l}|+|\mathbf{k}|}{2}-1}{\frac{m}{2}-1}
\frac{m}{l_i}\binom{\frac{m}{2}-1}{\frac{l_i}{2}-1}\right)\geq
2(\alpha-1).
\end{align}

If at least two $\beta_i$ or $\gamma_j$ (suppose they are $\beta_i$
and $\gamma_j$) are equal to 1, then
\begin{align}
&\text{ord}_2\left(\binom{\frac{m}{2}}{\frac{l_i}{2}}\frac{m}{k_j}\right)=\text{ord}_2\left(\frac{m}{k_j}
\frac{m}{l_i}\binom{\frac{m}{2}-1}{\frac{l_i}{2}-1}\right)\geq
2(\alpha-1).
\end{align}

Therefore, it also remains to show the inequality
$(\ref{formula-inequ3})$ which can be obtained by applying the Lemma
\ref{lemma-1}. We leave the details to the reader.

\end{proof}

Now, we can finish the proof of Theorem \ref{theorem-introduction}.
\begin{proof}
\begin{align}
n_{m,\mathbf{l},\mathbf{k}}(\tau)&=\sum_{d|\text{gcd}(m,\mathbf{l},\mathbf{k})}\frac{\mu(d)}{d^2}c_{m/d,\mathbf{l}/d,\mathbf{k}/d}(\tau)\\\nonumber
&=\frac{1}{m^2}\sum_{d|\text{gcd}(m,\mathbf{l},\mathbf{k})}\mu(d)\cdot(-1)^{\frac{m(\tau+1)+|\mathbf{l}|}{d}}
\binom{\frac{m\tau+|\mathbf{l}|+|\mathbf{k}|}{d}-1}{\frac{m}{d}-1}\\\nonumber
&\times\prod_{j=1}^{r}\binom{\frac{m}{d}}{\frac{l_j}{d}}\prod_{j=1}^s
\frac{m}{m+k_j}\binom{\frac{m+k_j}{d}}{\frac{k_j}{d}}.
\end{align}

Suppose we have the prime factorization
$m=p_1^{\alpha_1}p_2^{\alpha_2}\cdots p_n^{\alpha_n}$, we only need
to show that the summation term is divisible by $p_t^{2\alpha_t}$
for every $1\leq t\leq n$.

Given any such $p_t$, if $p_t\nmid
\text{gcd}(m,\mathbf{l},\mathbf{k})$, then by Lemma and , every
terms in the above summation is divisible by $p_t^{2\alpha_t}$.

if $p_t|\text{gcd}(m,\mathbf{l},\mathbf{k})$, then
\begin{align}
&\sum_{d|\text{gcd}(m,\mathbf{l},\mathbf{k})}\mu(d)\cdot(-1)^{\frac{m(\tau+1)+|\mathbf{l}|}{d}}
\binom{\frac{m\tau+|\mathbf{l}|+|\mathbf{k}|}{d}-1}{\frac{m}{d}-1}\prod_{j=1}^{r}\binom{\frac{m}{d}}{\frac{l_j}{d}}\prod_{j=1}^s
\frac{m}{m+k_j}\binom{\frac{m+k_j}{d}}{\frac{k_j}{d}}\\\nonumber
&=\pm
\sum_{d_t|\frac{\text{gcd}(m,\mathbf{l},\mathbf{k})}{p_t}}\left((-1)^{\frac{m(\tau+1)+|\mathbf{l}|}{d_t}}
\binom{\frac{m\tau+|\mathbf{l}|+|\mathbf{k}|}{d_t}-1}{\frac{m}{d_t}-1}\prod_{j=1}^{r}\binom{\frac{m}{d_t}}{\frac{l_j}{d_t}}\prod_{j=1}^s
\frac{m}{m+k_j}\binom{\frac{m+k_j}{d_t}}{\frac{k_j}{d_t}}\right.\\\nonumber
&\left.-(-1)^{\frac{m(\tau+1)+|\mathbf{l}|}{d_tp_t}}
\binom{\frac{m\tau+|\mathbf{l}|+|\mathbf{k}|}{d_tp_t}-1}{\frac{m}{d_tp_t}-1}\prod_{j=1}^{r}\binom{\frac{m}{d_tp_t}}{\frac{l_j}{d_tp_t}}\prod_{j=1}^s
\frac{m}{m+k_j}\binom{\frac{m+k_j}{d_tp_t}}{\frac{k_j}{d_tp_t}}
\right).
\end{align}

By Lemma \ref{lemma-2} and Lemma \ref{lemma-3}, the above terms is
divisible by $p_t^{2\alpha_t}$.
\end{proof}

\section{Quantum $2$-functions} \label{Section-quantum2functions}

\subsection{$2$-functions}
Motivated by the multiple covering formulas (\ref{formula-genus0})
and (\ref{formula-disk}), Schwarz, Vologodsky and Walcher
\cite{SVW1}, introduced the notion of {\em $s$-function} as integral
linear combinations of poly-logarithms. Here we review the
definition  of $2$-function.
\begin{definition} \label{definition-2function}
Given $t$ variables $z_1,...z_t$, we call a  formal power series
\begin{align}
W(z_1,...,z_t)=\sum_{d_1,..,d_t\geq 1}m_{d_1,...,d_t}z_1^{d_1}\cdots
z_t^{d_t}\in \mathbb{Q}[[z_1,...,z_t]]
\end{align}
with rational coefficients $m_{d_1,...,d_t}$ a $2$-function if it
can be written as an integral linear combination of di-lograrithms
\begin{align}
W(z_1,...,z_t)=\sum_{d_1,...,d_t\geq
1}n_{d_1,...,d_t}\text{Li}_2(z_1^{d_1}\cdots z_t^{d_t}).
\end{align}

\end{definition}

\begin{lemma} \label{Lemma-2function}
$W(z_1,...,z_t)\in \mathbb{Q}[[z_1,...,z_t]]$ is an $2$-function if
and only if
$$m_{d_1,...,d_t}-\frac{1}{p^2}m_{\frac{d_1}{p},\frac{d_2}{p},...,\frac{d_t}{p}}$$ is
$p$-integral for all $p,d_1,...,d_t$, where
$m_{\frac{d_1}{p},\frac{d_2}{p},...,\frac{d_t}{p}}=0$ if $p\nmid
\text{gcd}(d_1,...,d_t)$.
\end{lemma}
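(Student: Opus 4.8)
The plan is to reduce the statement to a Möbius inversion over the multi-degree lattice followed by a prime-by-prime $p$-adic bookkeeping argument, entirely parallel in spirit to the proof of Theorem~\ref{theorem-introduction}. Expanding $\text{Li}_2(x)=\sum_{k\ge1}x^k/k^2$, a series $W=\sum_{\mathbf{d}}m_{\mathbf{d}}\mathbf{z}^{\mathbf{d}}\in\mathbb{Q}[[z_1,\dots,z_t]]$ is a $2$-function if and only if the (uniquely determined) rational numbers $n_{\mathbf{d}}$ satisfying $m_{\mathbf{d}}=\sum_{k\mid\gcd(\mathbf{d})}\frac{1}{k^2}n_{\mathbf{d}/k}$ are all integers, where $k\mid\gcd(\mathbf{d})$ means $k$ divides every component of $\mathbf{d}$. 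By Möbius inversion along each ray through the origin (the one-variable identity with $\gcd(\mathbf{d})$ in the role of the degree) this is equivalent to
\begin{align*}
n_{\mathbf{d}}=\sum_{k\mid\gcd(\mathbf{d})}\frac{\mu(k)}{k^2}\,m_{\mathbf{d}/k}\in\mathbb{Z}\qquad\text{for all }\mathbf{d}.
\end{align*}
Since $n_{\mathbf{d}}\in\mathbb{Z}$ iff $\text{ord}_p n_{\mathbf{d}}\ge0$ for every prime $p$, I would fix a prime $p$ throughout and compare with the condition that $m_{\mathbf{d}}-\frac{1}{p^2}m_{\mathbf{d}/p}$ be $p$-integral (with the convention $m_{\mathbf{d}/p}=0$ when $p\nmid\gcd(\mathbf{d})$).

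For the "only if" direction, assume all $n_{\mathbf{d}}\in\mathbb{Z}$ and substitute $m_{\mathbf{d}}=\sum_{k\mid\gcd(\mathbf{d})}\frac{1}{k^2}n_{\mathbf{d}/k}$ into $m_{\mathbf{d}}-\frac{1}{p^2}m_{\mathbf{d}/p}$. Reindexing the second sum by $k\mapsto pk$ rewrites $\frac{1}{p^2}m_{\mathbf{d}/p}$ as $\sum_{k\mid\gcd(\mathbf{d}),\,p\mid k}\frac{1}{k^2}n_{\mathbf{d}/k}$, so the difference telescopes to $\sum_{k\mid\gcd(\mathbf{d}),\,p\nmid k}\frac{1}{k^2}n_{\mathbf{d}/k}$; every term is $p$-integral since $p\nmid k$ and $n_{\mathbf{d}/k}\in\mathbb{Z}$.

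For the "if" direction, start from $n_{\mathbf{d}}=\sum_{k\mid\gcd(\mathbf{d})}\frac{\mu(k)}{k^2}m_{\mathbf{d}/k}$ and split the sum according to whether $p\mid k$. Because $\mu$ is supported on squarefree integers, the $p\mid k$ part (writing $k=pk'$, $\mu(k)=-\mu(k')$) equals $-\frac{1}{p^2}\sum_{k'\mid\gcd(\mathbf{d}/p),\,p\nmid k'}\frac{\mu(k')}{k'^2}m_{\mathbf{d}/(pk')}$. The key bookkeeping observation is that the divisors of $\gcd(\mathbf{d})$ coprime to $p$ coincide with those of $\gcd(\mathbf{d}/p)=\gcd(\mathbf{d})/p$ coprime to $p$, so the two index sets match and
\begin{align*}
n_{\mathbf{d}}=\sum_{\substack{k\mid\gcd(\mathbf{d})\\ p\nmid k}}\frac{\mu(k)}{k^2}\left(m_{\mathbf{d}/k}-\frac{1}{p^2}m_{\mathbf{d}/(kp)}\right).
\end{align*}
Each factor $\mu(k)/k^2$ is $p$-integral, and each bracketed term is $p$-integral by hypothesis (applying the convention to $m_{\mathbf{d}/(kp)}$ when $p\nmid\gcd(\mathbf{d}/k)$), so $\text{ord}_p n_{\mathbf{d}}\ge0$; letting $p$ range over all primes gives $n_{\mathbf{d}}\in\mathbb{Z}$.

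I do not anticipate a serious obstacle here: the argument is a clean combination of Möbius inversion and a splitting of the divisor sum by divisibility by $p$, of the same flavour as the final step in the proof of Theorem~\ref{theorem-introduction}. The only points needing genuine care are the justification of Möbius inversion in the multi-index setting and the identity of divisor sets coprime to $p$ for $\gcd(\mathbf{d})$ and $\gcd(\mathbf{d})/p$; in addition, the convention $m_{\mathbf{d}/p}=0$ when $p\nmid\gcd(\mathbf{d})$ must be applied uniformly so that the degenerate terms are handled consistently in both directions.
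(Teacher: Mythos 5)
Your proposal is correct and follows essentially the same route as the paper's proof: expand $\mathrm{Li}_2$, apply M\"obius inversion along $\gcd(\mathbf{d})$, telescope the divisor sum for the ``only if'' direction, and split the divisor sum by divisibility by $p$ (using $\mu(pk)=-\mu(k)$ for $p\nmid k$ and the vanishing of $\mu$ on non-squarefree integers) for the ``if'' direction. The bookkeeping points you flag at the end are exactly the ones the paper handles implicitly, and your treatment of them is sound.
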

\begin{proof}
The proof is essentially given in \cite{KSV,SVW1}. By using the
formula
\begin{align}
W(z_1,..,z_t)&=\sum_{d_1,..,d_t\geq 1} m_{d_1,...,d_t}
z_1^{d_1}\cdots z_t^{d_t}\\\nonumber &=\sum_{d_1,d_2,..,d_t,k\geq
1}\frac{n_{d_1,d_2,..,d_t}}{k^2}(z_1^{d_1}\cdots z_t^{d_t})^k
\end{align}
we obtain
\begin{align} \label{formula-mexpression}
m_{d_1,...,d_t}=\sum_{k|\text{gcd}(d_1,...,d_t)}\frac{1}{k^2}n_{\frac{d_1}{k},..,\frac{d_t}{k}}.
\end{align}
Applying the M\"obius inversion formula, we find
\begin{align} \label{formula-nexpression}
n_{d_1,...,d_t}=\sum_{k|\text{gcd}(d_1,...,d_t)}\frac{\mu(k)}{k^2}m_{\frac{d_1}{k},...,\frac{d_t}{k}}.
\end{align}
Indeed, by the definition of M\"obius function $\mu(k)=0$ if $k$ is
not squarefree, $\mu(k)=(-1)^r$ if $k=p_1\cdots p_r$ is the product
of $r$ distinct primes. It follows that
\begin{align} \label{formula-mobiussum}
\sum_{k| d}\mu(k)=\delta_{1d}.
\end{align}

Applying formula (\ref{formula-mobiussum}), we obtain
\begin{align}
\sum_{k|\text{gcd}(d_1,...,d_t)}\frac{\mu(k)}{k^2}m_{\frac{d_1}{k},...,\frac{d_t}{k}}
&=\sum_{k|\text{gcd}(d_1,...,d_t)}\frac{\mu(k)}{k^2}\sum_{l|\text{gcd}(\frac{d_1}{k},...,\frac{d_t}{k})}\frac{1}{l^2}n_{\frac{d_1}{kl},...,\frac{d_t}{kl}}\\\nonumber
&=\sum_{p|\text{gcd}(d_1,...,d_t)}\sum_{k|p}\frac{\mu(k)}{p^2}n_{\frac{d_1}{p},...,\frac{d_t}{p}}\\\nonumber
&=n_{d_1,..,d_t}
\end{align}
which is just the formula (\ref{formula-nexpression}).

$\Rightarrow$: by the formula (\ref{formula-mexpression}),
\begin{align}
m_{d_1,...,d_t}-\frac{1}{p^2}m_{\frac{d_1}{p},...,\frac{d_t}{p}}&=\sum_{k|\text{gcd}(d_1,..,d_t)}\frac{1}{k^2}
n_{\frac{d_1}{k},...,\frac{d_t}{k}}-\frac{1}{p^2}\sum_{l|\frac{\text{gcd}(d_1,..,d_t)}{p}}\frac{1}{l^2}n_{\frac{d_1}{pl},..,\frac{d_t}{pl}}\\\nonumber
&=\sum_{\substack{k|\text{gcd}(d_1,..,d_t)\\ p\nmid k
}}\frac{1}{k^2}n_{\frac{d_1}{k},...,\frac{d_t}{k}}
\end{align}
Note the sum is restricted to those $k$ has no prime factor $p$, and
therefore the righthand side is $p$-integral if for any
$n_{d_1,..,d_t}\in \mathbb{Z}$.

$\Leftarrow$: since $\mu(k)=0$ if $k$ is divisible by $p^2$, and
$\mu(pk)=-\mu(k)$ if $p\nmid k$,  by formula
(\ref{formula-nexpression}), we get
\begin{align}
n_{d_1,...,d_t}&=\sum_{k|\text{gcd}(d_1,...,d_t)}\frac{\mu(k)}{k^2}m_{\frac{d_1}{k},...,\frac{d_t}{k}}\\\nonumber
&=\sum_{\substack{k|\text{gcd}(d_1,...,d_t)\\
p\nmid
k}}\frac{\mu(k)}{k^2}\left(m_{\frac{d_1}{k},...,\frac{d_t}{k}}-\frac{1}{p^2}m_{\frac{d_1}{pk},...,\frac{d_t}{pk}}\right)
\end{align}
with the same understanding that
$m_{\frac{d_1}{pk},...,\frac{d_t}{pk}}=0$ if $p\nmid d$. We see that
if
$m_{d_1,...,d_t}-\frac{1}{p^2}m_{\frac{d_1}{p},...,\frac{d_t}{p}}$
are $p$-integral for all $p,d_1,...,d_t$,  then $n_{d_1,...,d_t}$
are $p$-integral for any $p$, hence integral.
\end{proof}

Indeed, the proof Theorem \ref{theorem-introduction} implies that
\begin{theorem}
The disc counting formula for generalized conifold given by formula
(\ref{formula-disk-generalizedconifold})
\begin{align}
F_{\text{disk}}^{(\widehat{X},D_\tau)}=\sum_{m,\mathbf{l},\mathbf{k}}c_{m,\mathbf{l},\mathbf{k}}(\tau)x^m
a_1^{l_1}\cdots a_r^{l_r}A_1^{k_1}\cdots A_s^{l_s}
\end{align}
is a 2-function.
\end{theorem}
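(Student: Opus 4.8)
The plan is to verify the arithmetic characterization of $2$-functions given by Lemma \ref{Lemma-2function}. Regarding $F_{\text{disk}}^{(\widehat{X},D_\tau)}$ as a power series in the $1+r+s$ variables $x,a_1,\dots,a_r,A_1,\dots,A_s$, with the coefficient of $x^m a_1^{l_1}\cdots a_r^{l_r}A_1^{k_1}\cdots A_s^{k_s}$ equal to $m_{m,\mathbf{l},\mathbf{k}}:=c_{m,\mathbf{l},\mathbf{k}}(\tau)$, it suffices to show that for every prime $p$ and all indices the quantity $c_{m,\mathbf{l},\mathbf{k}}(\tau)-\frac{1}{p^2}c_{m/p,\mathbf{l}/p,\mathbf{k}/p}(\tau)$ is $p$-integral, with the usual convention that the subtracted term is $0$ when $p\nmid\gcd(m,\mathbf{l},\mathbf{k})$. (The monomials in which some $a_i$ or $A_j$ does not occur, i.e.\ some $l_i$ or $k_j$ equals $0$, are subsumed by applying the same criterion in the smaller set of variables that actually occur.)

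First I would rewrite $c_{m,\mathbf{l},\mathbf{k}}(\tau)$ in a shape adapted to Lemmas \ref{lemma-2} and \ref{lemma-3}. Using $\frac{m}{m+k_j}\binom{m+k_j}{k_j}=\frac{m}{k_j}\binom{m+k_j-1}{k_j-1}$ and writing $E=m(\tau+1)+|\mathbf{l}|$, set
\[
P(m,\mathbf{l},\mathbf{k})=\binom{m\tau+|\mathbf{l}|+|\mathbf{k}|-1}{m-1}\prod_{i=1}^{r}\binom{m}{l_i}\prod_{j=1}^{s}\frac{m}{k_j}\binom{m+k_j-1}{k_j-1},
\]
so that $c_{m,\mathbf{l},\mathbf{k}}(\tau)=\frac{(-1)^{E}}{m^2}P(m,\mathbf{l},\mathbf{k})$. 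Since $\frac{1}{(m/p)^2}=\frac{p^2}{m^2}$ and the sign attached to $c_{m/p,\mathbf{l}/p,\mathbf{k}/p}(\tau)$ is $(-1)^{E/p}$, one gets
\[
c_{m,\mathbf{l},\mathbf{k}}(\tau)-\frac{1}{p^2}c_{m/p,\mathbf{l}/p,\mathbf{k}/p}(\tau)=\frac{1}{m^2}\Big((-1)^{E}P(m,\mathbf{l},\mathbf{k})-(-1)^{E/p}P(m/p,\mathbf{l}/p,\mathbf{k}/p)\Big).
\]
Writing $\alpha=\text{ord}_p m$, we have $\text{ord}_p(m^2)=2\alpha$, so this difference is $p$-integral precisely when the $p$-adic order of the bracket is at least $2\alpha$.

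Next I would invoke the estimates already established. For odd $p$ one has $(-1)^{E}=((-1)^{E/p})^{p}=(-1)^{E/p}$, so the bracket is $\pm\big(P(m,\mathbf{l},\mathbf{k})-P(m/p,\mathbf{l}/p,\mathbf{k}/p)\big)$, and the bound $\text{ord}_p\ge 2\alpha$ is exactly the content of Lemma \ref{lemma-2}. For $p=2$ the two sign factors genuinely differ and the required bound $\text{ord}_2\ge 2\alpha$ is exactly Lemma \ref{lemma-3}. In the remaining case $p\nmid\gcd(m,\mathbf{l},\mathbf{k})$ the subtracted term vanishes by definition and some $\beta_i$ or $\gamma_j$ equals $0$; Lemmas \ref{lemma-2} and \ref{lemma-3}, whose second term is declared to be $0$ in exactly this situation, then give $\text{ord}_p P(m,\mathbf{l},\mathbf{k})\ge 2\alpha$, while if moreover $p\nmid m$ this is automatic since $P(m,\mathbf{l},\mathbf{k})$ is a product of binomial coefficients (note $\frac{m}{k_j}\binom{m+k_j-1}{k_j-1}=\binom{m+k_j-1}{k_j}$), hence an integer. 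Therefore $c_{m,\mathbf{l},\mathbf{k}}(\tau)-\frac{1}{p^2}c_{m/p,\mathbf{l}/p,\mathbf{k}/p}(\tau)$ is $p$-integral for every prime $p$, and Lemma \ref{Lemma-2function} then gives that $F_{\text{disk}}^{(\widehat{X},D_\tau)}$ is a $2$-function.

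The substantive input, namely the $p$-adic order inequalities, is entirely contained in Lemmas \ref{lemma-1}--\ref{lemma-3}, so there is no real obstacle remaining; the only care needed is bookkeeping. The point most likely to trip one up is the interaction of the sign $(-1)^{m(\tau+1)+|\mathbf{l}|}$ with division by $p$: it is harmless for odd $p$ but not for $p=2$, which is precisely why Lemma \ref{lemma-3} is stated with the signs built in rather than being a literal copy of Lemma \ref{lemma-2}. One must also match the two distinct ``declared zero'' conventions (the one in Lemma \ref{Lemma-2function} for $p\nmid\gcd(m,\mathbf{l},\mathbf{k})$, and the one in Lemmas \ref{lemma-2}/\ref{lemma-3} for a vanishing $\beta_i$ or $\gamma_j$), and dispose of the monomials with some exponent equal to $0$ by restricting to the subset of variables that actually occur.
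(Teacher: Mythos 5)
Your proposal is correct and follows essentially the same route as the paper: reduce via the arithmetic criterion of Lemma \ref{Lemma-2function} to the $p$-integrality of $c_{m,\mathbf{l},\mathbf{k}}(\tau)-\frac{1}{p^2}c_{m/p,\mathbf{l}/p,\mathbf{k}/p}(\tau)$, factor out $\frac{1}{m^2}$, and invoke Lemma \ref{lemma-2} (odd $p$, where the signs cancel) and Lemma \ref{lemma-3} ($p=2$, signs built in). Your treatment of the sign bookkeeping and of the ``declared zero'' conventions is in fact more explicit than the paper's own very terse argument.
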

\begin{proof}
By Lemma \ref{Lemma-2function}, we only need to show that
\begin{align}
c_{m,\mathbf{l},\mathbf{k}}(\tau)-\frac{1}{p^2}c_{\frac{m}{p},\frac{\mathbf{l}}{p},\frac{\mathbf{k}}{p}}(\tau)
\end{align}
is $p$-integral for all prime $p$ and positive integers
$m,\mathbf{l},\mathbf{k}$.

Note that this statement is an easy consequence of Lemma and  .
Indeed, by formula (\ref{formula-cmlk}), we have
\begin{align}
&c_{m,\mathbf{l},\mathbf{k}}(\tau)-\frac{1}{p^2}c_{\frac{m}{p},\frac{\mathbf{l}}{p},\frac{\mathbf{k}}{p}}(\tau)\\\nonumber
&=\frac{1}{m^2}\left((-1)^{m(\tau+1)+|\mathbf{l}|}
\binom{m\tau+|\mathbf{l}|+|\mathbf{k}|-1}{m-1}\prod_{j=1}^{r}\binom{m}{l_j}\prod_{j=1}^s
\frac{m}{m+k_j}\binom{m+k_j}{k_j}\right.\\\nonumber
&-\left.(-1)^{\frac{m(\tau+1)+|\mathbf{l}|}{p}}
\binom{\frac{m\tau+|\mathbf{l}|+|\mathbf{k}|}{p}-1}{\frac{m}{p}-1}\prod_{j=1}^{r}\binom{m}{l_j}\prod_{j=1}^s
\frac{m}{m+k_j}\binom{\frac{m+k_j}{p}}{\frac{k_j}{p}}\right).
\end{align}
Then Lemma \ref{lemma-2} and Lemma \ref{lemma-3} implies that
$c_{m,\mathbf{l},\mathbf{k}}(\tau)-\frac{1}{p^2}c_{\frac{m}{p},\frac{\mathbf{l}}{p},\frac{\mathbf{k}}{p}}(\tau)$
is $p$-integral for all $p,d_1,..,d_t$.
\end{proof}

\subsection{Quantum $2$-functions}
Motivated by Ooguri-Vafa's work \cite{OV} which generalized the disc
counting formula  (\ref{formula-disk}) to  the higher genus case, we
introduce the notion of quantum $2$-function, that means there
exists a deformation parameter $\lambda$, such that when
$\lambda\rightarrow 0$, the quantum $2$-function reduced to the
$2$-function in the sense of Schwarz-Vologodsky-Walcher.

For convenience, we introduce some notations first. We set
$\mathbf{z}=(z_1,...,z_r)$ for $r$ variables $z_1,..z_r$, and
$\mathbf{d}=(d_1,...,d_r)$ for $r$ nonnegative integers
$d_1,...,d_r$, in particular  $\mathbf{0}=(0,...,0)$. Then we denote
$\mathbf{z}^{\mathbf{d}}=\prod_{i=1}^rz_i^{d_i}$.

\begin{definition}  We call a  formal
power series
\begin{align} \label{formula-F}
F(\lambda,\mathbf{z},x)=\sum_{g\geq 0,m\geq 1}\sum_{\mathbf{d}>
 \mathbf{0}}\lambda^{2g}K_{g,\mathbf{d},m}\mathbf{z}^{\mathbf{d}}x^m\in
\mathbb{Q}[[\lambda^2,z_1,...,z_r,x]]
\end{align}
with rational coefficients $K_{g,\mathbf{d},m}$ a {\em quantum
$2$-function} if it can be written in the following form
\begin{align} \label{formula-quantum2function}
F(\lambda,\mathbf{z},x)&=\sum_{g\geq 0,m\geq 1}\sum_{\mathbf{d}>
\mathbf{0}}\sum_{k\geq
1}n_{g,\mathbf{d},m}\frac{2\lambda}{km}\sin\left(\frac{km\lambda}{2}\right)
\left(2\sin
\frac{k\lambda}{2}\right)^{2g-2}\mathbf{z}^{\mathbf{kd}}x^{km}
\end{align}
with $n_{g,\mathbf{d},m}\in \mathbb{Z}$.
\end{definition}
It is clear that when the parameter $\lambda=0$, $F(0,\mathbf{z},x)$
is just the $2$-function in the sense of Definition
\ref{definition-2function}.

For convenience, let
$
\hat{F}(\lambda,\mathbf{z},x)=\sqrt{-1}\lambda^{-1}F(\lambda,\mathbf{z},x).
$
We set $q=e^{\sqrt{-1}\lambda}$ and
$\{m\}=q^{\frac{m}{2}}-q^{-\frac{m}{2}}$. Let
$\hat{n}_{g,\mathbf{d},m}=(-1)^{g-1}n_{g,\mathbf{d},m}$, and we
introduce the function
\begin{align}
\hat{f}_m(q,\mathbf{z})=\sum_{g\geq 0}\sum_{\mathbf{d}>
\mathbf{0}}\hat{n}_{g,\mathbf{d},m}(q^{\frac{1}{2}}-q^{-\frac{1}{2}})^{2g-2}\mathbf{z}^{\mathbf{d}}.
\end{align}
For $k\in \mathbb{Z}_+$, we define the $k$-th Adams operator
$\Psi_k$ as the $\mathbb{Q}$-algebra map on
$\mathbb{Q}(q^{\frac{1}{2}})[[x,\mathbf{z}]]$ by
\begin{align}
\Psi_{k}(g(x,q,\mathbf{z}))= g(x^k,q^k,\mathbf{z}^k).
\end{align}

Then, the formula (\ref{formula-quantum2function}) can be rewritten
as
\begin{align}
\hat{F}(q,\mathbf{z},x)=\sum_{k\geq
1}\frac{1}{k}\Psi_k\left(\sum_{m\geq
1}\frac{\{m\}}{m}\hat{f}_{m}(q,\mathbf{z})x^m\right).
\end{align}

By M\"obius inversion formula, we obtain
\begin{align}
\sum_{m\geq 1}\hat{f}_m(q,\mathbf{z})x^m=\sum_{k\geq
1}\frac{\mu(k)}{k}\Psi_k\left(\hat{F}(q,\mathbf{z},x)\right)=\sum_{k\geq
1}\frac{\mu(k)}{k}\hat{F}(q^k,\mathbf{z}^k,x^k).
\end{align}
Therefore, for $m\geq 1$, we have
\begin{align} \label{formula-f}
\hat{f}_m(q,\mathbf{z})=\frac{\sqrt{-1}}{\{m\}}\sum_{k|m}\frac{\mu(k)}{k^{2-2g}}\sum_{g\geq
0,\mathbf{d}>0}\lambda^{2g-1}K_{g,\mathbf{d},\frac{m}{k}}\mathbf{z}^{k\mathbf{d}}.
\end{align}

In conclusion, the function $F(\lambda,\mathbf{z},x)$ given by
formula (\ref{formula-F}) is a quantum $2$-function if and only the
function $\hat{f}_m(q,\mathbf{z})$ given by formula
(\ref{formula-f}) belongs to the ring
$z^{-2}\mathbb{Z}[z^{2},\mathbf{z}]$, where
$z=q^{\frac{1}{2}}-q^{-\frac{1}{2}}$.

However, in general, it is difficult to show the above statement for
a function given by formula (\ref{formula-F}). Based on the works
\cite{OV,LM1,LM2,LMV}, it is expected that the generating functions
of certain type open Gromov-Witten invariants in topological string
theory provide many examples of quantum $2$-functions.

Let us study the basic model $(\hat{X},D_\tau)$ with $\hat{X}$ the
resolved conifold and $D_\tau$ the Aganagic-Vafa A-brane which is
the dual of framed unknot $U_\tau$ with framing $\tau$ in
Chern-Simons theory. We consider the generating function
\begin{align} \label{formula-F-conifold}
F^{(\hat{X},D_\tau)}(\lambda,a,x)=\sum_{g\geq 0,m\geq 1}\sum_{d>
 0}\lambda^{2g}K_{g,d,m}^{(\hat{X},D_\tau)}a^{d}x^m
\end{align}
where $K_{g,d,m}^{(\hat{X},D_\tau)}$ are the one-hole genus $g$ open
Gromov-Witten invariants with degree $d$ and writhe number $m$,
whose mathematical definition was given in \cite{KL}.

According to the Mari\~no-Vafa's formula proposed in \cite{MV}, and
proved by \cite{LLZ,Zhou}, one can show that the corresponding
formula (\ref{formula-f}) in this case, denoted by
\begin{align} \label{formula-f-conifold}
\hat{f}_m^{(\hat{X},D_\tau)}(q,a)
\end{align}
can be given as follow:

Let $n\in \mathbb{Z}$ and $\mu,\nu$ denote the partitions. We
introduce the following notations
\begin{align*}
\{n\}_x=x^{\frac{n}{2}}-x^{-\frac{n}{2}}, \
\{\mu\}_{x}=\prod_{i=1}^{l(\mu)}\{\mu_i\}_x.
\end{align*}
In particular, let $\{n\}=\{n\}_q$ and $\{\mu\}=\{\mu\}_q$.

Let
\begin{align*}
\mathcal{Z}_m(q,a)=(-1)^{m\tau}\sum_{|\nu|=m}\frac{1}{\mathfrak{z}_\nu}\frac{\{m\nu\tau\}}{\{m\}\{m\tau\}}\frac{\{\nu\}_a}{\{\nu\}}
\end{align*}
where $\mathfrak{z}_{\nu}=|Aut(\nu)|\prod_{i=1}^{l(\nu)}\nu_i$ and
$\{m\}$ denotes the quantum integer, see Section 2 in \cite{LZ} for
 these notations. Then we
have the following formula for the expression
(\ref{formula-f-conifold})
\begin{align}
\hat{f}_m^{(\hat{X},D_\tau)}(q,a)=\sum_{d|m}\mu(d)\mathcal{Z}_{m/d}(q^d,a^d).
\end{align}
In \cite{LZ},  we have proved that, for any $m\geq 1$,
\begin{align}
\hat{f}_m^{(\hat{X},D_\tau)}(q,a)\in z^{-2}\mathbb{Z}[z^2,a^{\pm
1}],
\end{align}
where $z=q^{\frac{1}{2}}-q^{-\frac{1}{2}}$. Therefore, we have
\begin{theorem}
The function $F^{(\hat{X},D_\tau)}(\lambda,a,x)$ given by formula
(\ref{formula-F-conifold}) is a quantum $2$-function.
\end{theorem}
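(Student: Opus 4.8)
The plan is to reduce the statement to the reformulation criterion established in this subsection: a formal power series of the shape (\ref{formula-F}) is a quantum $2$-function if and only if, for every $m\geq 1$, the associated function $\hat f_m(q,\mathbf z)$ defined by (\ref{formula-f}) lies in the ring $z^{-2}\mathbb Z[z^2,\mathbf z]$, where $z=q^{1/2}-q^{-1/2}$. Thus it suffices to produce $\hat f_m$ for the model $(\hat X,D_\tau)$ explicitly and to verify this membership. Since here there is a single open modulus $a$, the relevant ring is $z^{-2}\mathbb Z[z^2,a^{\pm 1}]$.

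First I would carry out the identification of $\hat f_m^{(\hat X,D_\tau)}$. The generating function (\ref{formula-F-conifold}) packages the one-hole open Gromov-Witten invariants $K_{g,d,m}^{(\hat X,D_\tau)}$ of \cite{KL}; by the Mari\~no-Vafa formula of \cite{MV}, proved in \cite{LLZ,Zhou}, the all-genus open partition function of this model is controlled by the quantum-group (Chern-Simons) invariants of the framed unknot $U_\tau$. After passing to $\hat F=\sqrt{-1}\lambda^{-1}F$, expressing it through the Adams operators $\Psi_k$ as in the reformulation above, and applying M\"obius inversion, one obtains precisely
\begin{align}
\hat f_m^{(\hat X,D_\tau)}(q,a)=\sum_{d\mid m}\mu(d)\,\mathcal Z_{m/d}(q^d,a^d),\qquad
\mathcal Z_m(q,a)=(-1)^{m\tau}\sum_{|\nu|=m}\frac{1}{\mathfrak z_\nu}\frac{\{m\nu\tau\}}{\{m\}\{m\tau\}}\frac{\{\nu\}_a}{\{\nu\}},
\end{align}
which is the expression recorded above for (\ref{formula-f-conifold}).

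Finally I would invoke the integrality theorem of \cite{LZ}: there it is proved that for every $m\geq 1$ one has $\hat f_m^{(\hat X,D_\tau)}(q,a)\in z^{-2}\mathbb Z[z^2,a^{\pm 1}]$. Combined with the criterion of the previous subsection this immediately yields that $F^{(\hat X,D_\tau)}(\lambda,a,x)$ is a quantum $2$-function, with the integers $n_{g,d,m}$ recovered as $(-1)^{g-1}$ times the coefficient of $z^{2g-2}a^d$ in $\hat f_m^{(\hat X,D_\tau)}(q,a)$. The only genuine obstacle in this argument is the identification step: one must track consistently the twist $\sqrt{-1}\lambda^{-1}$, the sign $(-1)^{g-1}$ built into $\hat n_{g,\mathbf d,m}$, and the framing sign $(-1)^{m\tau}$ through the Adams-operator rewriting and the M\"obius inversion, so that the $\hat f_m$ produced by (\ref{formula-f}) is \emph{exactly} the Mari\~no-Vafa expression whose membership in $z^{-2}\mathbb Z[z^2,a^{\pm1}]$ was established in \cite{LZ}; once these conventions are pinned down, the substantive number-theoretic work has already been done there and no further estimates are required.
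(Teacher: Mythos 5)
Your proposal follows exactly the route the paper takes: reduce to the criterion that $\hat f_m(q,\mathbf z)\in z^{-2}\mathbb Z[z^2,\mathbf z]$, identify $\hat f_m^{(\hat X,D_\tau)}(q,a)=\sum_{d\mid m}\mu(d)\mathcal Z_{m/d}(q^d,a^d)$ via the Mari\~no--Vafa formula, and then cite the integrality result of \cite{LZ}. This matches the paper's argument essentially verbatim, including the careful bookkeeping of the normalization and sign conventions you flag.
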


\subsection{Quantization and framing transformation}
We have shown that the $2$-function  can be viewed as the classical
limit of quantum $2$-function, that's why we use the terminology
``quantum'' here. Now we study the converse question, given a
2-function, how to construct its quantization?

Let $\mathbf{T}$  and  $\mathbf{qT}$ denote the set of 2-functions
and the set of quantum 2-functions respectively, we need to
construct a quantized operator $\mathbf{Q}$ from $\mathbf{T}$ to
$\mathbf{qT}$.

Motivated by the method of topological recursion introduced in
\cite{EO1} and its applications in topological string theory
\cite{BKMP,EO2,FLZ}, we briefly describe a natural way to construct
this operator $\mathbf{Q}$.

First, comparing to the relationship between the superpotential and
mirror curve in topological string, one can construct a spectral
curve $\mathcal{C}_W$ for a given 2-function $W\in \mathbf{T}$.
Next, we apply the method of topological recursion $\cite{EO1}$ to
this spectral curve $\mathcal{C}_W$, and we will obtain a series of
symplectic invariants $\{F_{g,n}(\mathcal{C}_W)\}$. Finally, we
collect all the $n=1$ terms $F_{g,1}(\mathcal{C}_W)$ to construct a
generating function $F(\mathcal{C}_W)$. Then we hope that
$F(\mathcal{C}_W)$ is the expected quantum $2$-function, in other
words, we have
\begin{conjecture} \label{conjecture}
$F(\mathcal{C}_W)$ is a quantum $2$-function for any $W\in
\mathbf{T}$.
\end{conjecture}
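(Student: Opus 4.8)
The plan is to separate Conjecture~\ref{conjecture} into a \emph{structural} part and an \emph{arithmetic} part. The structural part asserts that the generating function $F(\mathcal{C}_W)=\sum_{g\ge 0}\lambda^{2g}F_{g,1}(\mathcal{C}_W)$ assembled from the topological-recursion invariants of $\mathcal{C}_W$ automatically admits an expansion of the shape (\ref{formula-quantum2function}) with \emph{some} rational numbers $n_{g,\mathbf{d},m}$; equivalently, by the reformulation recorded just after (\ref{formula-f}), that the series $\hat f_m(q,\mathbf{z})$ attached to $F(\mathcal{C}_W)$ lie in $z^{-2}\mathbb{Q}[z^2,\mathbf{z}]$, $z=q^{1/2}-q^{-1/2}$. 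The arithmetic part then upgrades this to $\hat f_m(q,\mathbf{z})\in z^{-2}\mathbb{Z}[z^2,\mathbf{z}]$, i.e.\ $n_{g,\mathbf{d},m}\in\mathbb{Z}$. I expect the arithmetic part to be the genuine obstacle; the structural part should follow from the existing structure theory of topological recursion.

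For the structural part one must first pin down the assignment $W\mapsto\mathcal{C}_W$. Starting from $W=\sum_\beta n_{0,\beta}\,\mathrm{Li}_2(q^\beta)$ as in (\ref{formula-disk2}), one singles out an open variable $x$, sets $y=x\,\partial_x W$, and lets $\mathcal{C}_W$ be the compactification of $\{(e^u,e^v):v=y(e^u)\}$, a rational curve of ``mirror-curve type'' carrying a distinguished meromorphic framing datum, modelled on the Aganagic--Vafa mirror curves of \cite{AV,AKV}. For spectral curves of this type the remodeling theorem \cite{BKMP,EO2,FLZ} identifies $\sum_{g\ge 0}\lambda^{2g-1}F_{g,1}(\mathcal{C}_W)$ with an all-genus open Gromov--Witten potential, and such potentials are known from \cite{OV,LM1,LM2,LMV} to resum into exactly the Gopakumar--Vafa/LMOV form (\ref{formula-quantum2function}); alternatively the shape can be read directly off the graph-sum expression for $F_{g,1}$ coming from \cite{EO1}, since the $\Psi_k$-resummation encoded in (\ref{formula-f}) converts each local contribution into a power of $2\sin(k\lambda/2)$. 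Either route reduces the structural part to verifying that $\mathcal{C}_W$ is of the required type, a bookkeeping matter in setting up the correspondence $W\leftrightarrow\mathcal{C}_W$ carefully.

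The arithmetic part is where a new idea is needed, and is the reason the statement is only a conjecture: topological recursion is a nonlinear residue recursion, so $p$-adic valuations are hard to propagate through it. The approach I would take is: (a) extract from the graph-sum formula a closed expression $\hat f_m=\sum_{d\mid m}\mu(d)\,\Psi_d\,\mathcal{Z}^W_{m/d}$ in which the building block $\mathcal{Z}^W_m(q,\mathbf{z})$ is a finite $\mathbb{Q}$-linear combination of products of the input data $n_{0,\beta'}$ and elementary $q$-rational functions, in analogy with the $\mathcal{Z}_m(q,a)$ of the conifold model; (b) establish the quantum analogue of the criterion in Lemma~\ref{Lemma-2function}, i.e.\ a congruence of the form $\mathcal{Z}^W_m(q,\mathbf{z})\equiv\Psi_p\,\mathcal{Z}^W_{m/p}(q,\mathbf{z})$ in $z^{-2}\mathbb{Z}_{(p)}[z^2,\mathbf{z}]$ with error of $p$-adic order $\ge 2\,\mathrm{ord}_p m$, by combining the Frobenius-type congruences of Lemma~\ref{lemma-1} (that $f_p(p^\alpha n)\equiv f_p(p^\alpha)^n$ to high $p$-adic order) with the fact that $\Psi_p$ reduces to the Frobenius $q\mapsto q^p$, $\mathbf{z}\mapsto\mathbf{z}^p$ modulo $p$; (c) conclude by M\"obius inversion exactly as in the proof of Theorem~\ref{theorem-introduction}. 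The main obstacle is (b): the rational prefactors in (\ref{formula-f}) (the $1/\{m\}$, and the $1/k^2$ coming from the $g=0$ term) must be cancelled \emph{uniformly in} $g$, whereas a priori the recursion could generate denominators growing with $g$; bounding them --- presumably by leveraging the genus-by-genus integrality of the ``disk functions'' established for the conifold model in \cite{LZ}, and for the generalized conifolds by Theorem~\ref{theorem-introduction} --- is precisely the step a proof is currently missing. For $W$ of geometric origin the conjecture does follow, by combining the remodeling theorem \cite{FLZ} with the LMOV integrality known in those cases (e.g.\ through the knot--quiver correspondence, \cite{PS}); the point of Conjecture~\ref{conjecture} is to remove the geometric input and run (a)--(c) for an arbitrary $2$-function.
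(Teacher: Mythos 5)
This statement is not proved in the paper: it is stated as Conjecture~\ref{conjecture} and explicitly left open (``we hope that $F(\mathcal{C}_W)$ is the expected quantum $2$-function''; ``we leave the further study of the quantized operator $\mathbf{Q}$ to a separated paper''). So there is no proof in the paper to compare yours against, and your proposal should be judged as a free-standing attempt. As such it is not a proof but a research program, and you are commendably explicit that the decisive step is missing. The admitted gap in your arithmetic part (b) is real and is essentially the LMOV integrality conjecture in disguise: integrality of the $n_{g,\mathbf{d},m}$ extracted via (\ref{formula-f}) is open even for many honestly geometric brane configurations, so a congruence argument valid for an arbitrary $2$-function $W$ would be a major new result, not a routine adaptation of Lemma~\ref{lemma-1} and the M\"obius-inversion scheme of Section~\ref{section-proof}.

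There is also a gap in your \emph{structural} part that you underestimate. The remodeling theorem of \cite{BKMP,EO2,FLZ} is proved for the actual mirror curves of toric Calabi--Yau $3$-folds (and orbifolds), where the identification of $F_{g,1}$ with open Gromov--Witten potentials, and hence the applicability of the Ooguri--Vafa/LMOV resummation of \cite{OV,LM1,LM2,LMV}, rests on the geometry. A general $W\in\mathbf{T}$ is merely an integral combination $\sum_\beta n_{0,\beta}\mathrm{Li}_2(q^\beta)$ with no geometric origin; the locus $v=y(e^u)$ you propose for $\mathcal{C}_W$ need not be algebraic, need not be a mirror curve of any toric geometry, and the paper itself gives no definition of $\mathcal{C}_W$ beyond ``comparing to the relationship between the superpotential and mirror curve.'' Consequently neither route you offer --- remodeling plus LMOV, or reading the sine factors off the Eynard--Orantin graph sum --- establishes even that $\hat f_m(q,\mathbf{z})\in z^{-2}\mathbb{Q}[z^2,\mathbf{z}]$ (finite genus support in each degree, pole only at $z=0$ of order $2$) for arbitrary $W$. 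Your closing remark that the conjecture ``does follow'' for $W$ of geometric origin is also too strong: even there one needs the full higher-genus LMOV integrality, which is known only in special cases such as the framed unknot \cite{LZ} and, at disk level, the generalized conifolds of Theorem~\ref{theorem-introduction}.
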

The Conjecture \ref{conjecture} allows us to introduce a formal
operator $\mathbf{Q}: \mathbf{T}\rightarrow \mathbf{qT}$ by defining
$\mathbf{Q}(W)=F(\mathcal{C}_W)$ for any $2$-function $W\in
\mathbf{T}$.

On the other hand side, motivated by the notion of the framing
introduced in \cite{AV,AKV} which describes the ambiguity in toric
computations,  Schwarz-Vologodsky-Walcher considered the framing
transformation on $2$-function. For any $\tau\in \mathbb{Z}$, there
is a framing transformation operator $\mathbf{f}^{\tau}$. The main
result stated in \cite{SVW1} is that, for any $2$-function $W$,
$\mathbf{f}^{\tau}(W)$ is also a 2-function.

Hence one can lift the framing transformation operator
$\mathbf{f}^\tau$ from $\mathbf{T}$ to $\mathbf{Q}(\mathbf{T})$,
denoting the resulting operator by $\hat{\mathbf{f}}^{\tau}$, then
\begin{align}
\hat{\mathbf{f}}^{\tau}(\mathbf{Q}(W))=\mathbf{Q}(\mathbf{f}^\tau(W)).
\end{align}

Therefore, by Conjecture \ref{conjecture}, we obtain a lot of
quantum $2$-functions by quantization and framing transformations.
We leave the further study of the quantized operator $\mathbf{Q}$ to
a separated paper.

\section{Discussions and further questions}
\label{section-discussions} In this final section, we give some
related questions which deserve to be studied further.

 1. Finding more examples of $2$-functions and quantum 2-functions.
The existed examples of $2$-functions given in \cite{SVW1} are the
superpotentials or disc counting formulas in open topological string
theory. Motivated by the large $N$ duality of Chern-Simons and
topological string theory, the Chern-Simons partition function of a
knot which is a generating function of colored HOMFLYPT invariants
of the knot \cite{Zhu1}, carries the natural integrality structure
inherited from topological string theory. This statement is referred
as to be the Labastida-Mari\~no-Ooguri-Vafa (LMOV) conjecture in
\cite{LP1,CLPZ}. Therefore,  one can define the (quantum) 2-function
for any knot/link via the LMOV conjecture. If we consider the framed
knot $\mathcal{K}_\tau$ with an integer framing $\tau$, the
corresponding framed LMOV conjecture was studied in \cite{CLPZ}. It
is expected that the quantum 2-function of the framed knot
$\mathcal{K}_\tau$ can be written as
$\mathbf{Q}(\mathbf{f}^{\tau}(W))$, where $W$ is the 2-function of
the knot $\mathcal{K}_0$ with zero framing.

2. Studying the open topological string model beyond the
Aganagic-Vafa A-brane.  For example, Zaslow et al's works
\cite{TZ,Zas} proposed the wavefunction for some Lagrangian brane
which are asymptotic to Legendrian surface of genus $g$, they
conjectured the wavefunction encodes all-genus open Gromov-Witten
invariants. Therefore, one can derive a quantum 2-function from this
wavefunction. The basic number theory method used in Section
\ref{section-proof} can be applied to prove the integrality of some
formulas appearing in \cite{TZ,Zas}.

3. In \cite{SVW2}, the concept of $2$-function was generalized to
the situation of algebraic number field by replacing the rational
number field $\mathbb{Q}$ with algebraic number field $K$ in its
definition,  this generalization was motivated by the work in
topological string \cite{Wal3}. So it is also interesting to
consider the quantum 2-function in the situation of algebraic number
field.

\end{document}